\newtheorem{theorem}{Theorem}
\newtheorem{proposition}{Proposition}
\theoremstyle{definition}
\newtheorem{definition}{Definition}
\newtheorem{example}{Example} 
\newtheorem{remark}{Remark}
\DeclareMathOperator{\Complex}{\mathbb{C}}
\DeclareMathOperator{\Real}{\mathbb{R}}
\DeclareMathOperator{\Integer}{\mathbb{Z}}
\DeclareMathOperator{\Tr}{Tr} \DeclareMathOperator{\res}{res}
\DeclareMathOperator{\spanOp}{span} \DeclareMathOperator{\ad}{ad}
\DeclareMathOperator{\Sym}{Sym}
\begin{document}

%
\renewcommand{\evenhead}{J Bernatska and P Holod}
\renewcommand{\oddhead}{On Separation of Variables for Integrable Equations of Soliton
Type}

%
\thispagestyle{empty}

\FirstPageHead{14}{3}{2007}{\pageref{firstpage}--\pageref{lastpage}}{Article}

\copyrightnote{2007}{J Bernatska and P Holod}

\Name{On Separation of Variables for Integrable Equations of Soliton
Type}

\label{firstpage}

\Author{Julia Bernatska~$^\dag$ and Petro Holod~$^\ddag$ and }

\Address{$^\dag$ The National University of 'Kiev-Mohyla Academy', Ukraine \\
~~E-mail: jnb@ukma.kiev.ua\\[10pt]
$^\ddag$ The National University of 'Kiev-Mohyla Academy', Ukraine \\
~~Bogolyubov Institute for Theoretical Physics, Ukraine \\
~~E-mail: holod@ukma.kiev.ua}

\Date{Received August 11, 2006; Accepted October 6, 2006}

\begin{abstract}
We propose a general scheme for separation of variables in the
integrable Hamiltonian systems on orbits of the loop algebra
$\mathfrak{sl}(2,\Complex)\times \mathcal{P}(\lambda,\lambda^{-1})$.
In particular, we illustrate the scheme by application to modified
Korteweg---de Vries (MKdV), sin(sinh)-Gordon, nonlinear
Schr\"{o}dinger, and Heisenberg magnetic  equations.
\end{abstract}

\section*{Introduction}

Let us make a brief review of the problem.

After the fundamental paper \cite{Novikov}, B.~Dubrovin
(see~\cite{Dubrovin75}) proposed a separation of variables for
finite gap KdV system. B.~Dubrovin shows that the poles of an
appropriately normalized Baker---Akhiezer function for the auxiliary
linear spectral problem are the separation variables. The new
variables evolve on a hyperelliptic Riemannian surface $\mathcal{R}$
of genus $g$. The genus coincides with the number of degrees of
freedom of the finite gap phase space.

In the papers~\cite{Kozel, Its, Holod78, Prykarpatski} a separation
of variables is realized for $\sin$-Gordon equation, nonlinear
Schr\"{o}dinger equation, and the classic Thirring model. The case
of $\sin$-Gordon equation appears to be completely similar to the
KdV system. However, the cases of nonlinear Schr\"{o}dinger equation
and Thirring model have a distinction: \emph{the number of degrees
of freedom is greater by one than the genus of the corresponding
spectral curve}. Here the papers \cite{Its, Holod78, Prykarpatski}
suggest the separation of variables on a reduced phase space. Later,
the complex Liouville torus of nonlinear Schr\"{o}dinger equation
was proven to be the generalized Jacobian of a singular Riemannian
surface (see \cite{Previato}).

The ideas of the early papers on the integration of finite gap
systems were generalized by E.~Sklyanin~\cite{Sklyanin92,
Sklyanin95} and partly extended to the quantum integrable
models~\cite{Sklyanin92hep, Smirnov}.

At the beginning of the 90s a new technique of separation of
variables appeared that effectively uses bi-hamiltonian, or
multi-hamiltonian, properties of integrable systems,
see~\cite{Antonowicz, Blaszak98, Blaszak99, Blaszak00}
and~\cite{Magri91, Magri00R, Magri00T, Falqui}. The main result in
this direction is the diagonalization of recursive Nijenhuis
operator. In the papers~\cite{Magri91, Magri00T} the method is
applied to KdV and Boussinesq hierarchies, and classical
finite-dimensional systems.

The papers \cite{Veselov82, Veselov84, Dubrovin82} investigate the
connection between the problem of separation of variables and the
parametrization of compact tori by symmetric products of Riemannian
surfaces. According to~\cite{Veselov82, Veselov84}, \emph{if a
change of variables reduces Liouville 1-form to a sum of meromorphic
differentials on the corresponding Riemannian surface, then we say
that the new variables are the separation variables.}

We propose a method of separation of variables for integrable
Hamiltonian systems that is connected with the orbit structure of
affine Lie algebras. The fact that finite gap phase space of an
integrable soliton hierarchy has an orbital structure was
established in~\cite{Holod85, Holod83}.\footnote{The results
of~\cite{Holod85, Holod83}  are partially covered
by~\cite{Harnad92}. However the authors of~\cite{Harnad92}  took no
notice of the remarkable \emph{duality} between pairs of soliton
equations: MKdV and sin-Gordon equations, KdV and Liouville
equations, nonlinear Schr\"{o}dinger and Heisenberg magnetic
equations, etc. The duality is evident if one uses the orbital
approach. The pairs of dual equations have common  Liouville torus
and separation variables. Sometimes, there exists a gauge
equivalence between the equations of a pair, and the equivalence
extends to the total infinite phase space \cite{Takhtajan}.} The
Hamiltonian systems in question obey the equations of Lax type, and
hence the separation variables are points on the corresponding
\emph{spectral curve}. Note that such systems are multi-Hamiltonian,
which connects our results with the results of~\cite{Magri91,
Magri00R, Magri00T, Falqui}.

In Sections 1 and 3 we reproduce the key results from
\cite{Holod83,Holod85} about finite gap phase spaces for integrable
equations as orbits of loop algebra. We illustrate our scheme by the
examples of modified Korteweg-de Vries (MKdV) system,
sin(sinh)-Gordon equation, nonlinear Schr\"{o}dinger equation, and
Heisenberg magnetic chain.

This paper is organized as follows.  Sections 1 and 2 are devoted to
MKdV system and sin(sinh)-Gordon equation. In Section 1 we construct
adjoint Poisson spaces and define the orbits regarded as phase
spaces for MKdV system and sin(sinh)-Gordon equation. The
construction is discussed in more detail in \cite{Bern}. In Section
2 we describe the scheme for separation of variables and illustrate
it by application to MKdV system and sin(sinh)-Gordon equation.
We show that the separation of variables
is achieved on both orbits simultaneously. In Sections 3 we
construct adjoint Poisson spaces and define the orbits regarded as
phase spaces for nonlinear Schr\"{o}dinger equation and Heisenberg
magnetic chain. In Sections 4 and 5 we similarly consider separation
of variables for nonlinear Schr\"{o}dinger equation and Heisenberg
magnetic chain, accordingly.

\textbf{Acknowledgements.} The authors are grateful to participants
of the scientific seminar `Integrable Hamiltonian Systems and
Solitons' T.~Skrypnyk, D.~Leikin, N.~Yorgov for useful remarks and
discussions.

\section{Phase spaces for MKdV system and sin-Gordon equation
as orbits in  $\mathfrak{sl}(2,\Complex)\otimes
\mathcal{P}(\lambda,\lambda^{-1})$}

First, let us recall some constructions from \cite{Holod85,
Holod83}. Take the algebra $\mathfrak{sl}(2,\Complex)$ with the
basis
\begin{equation*}
  H=\begin{pmatrix} \frac{1}{2}&0 \\ 0&
  -\frac{1}{2}\end{pmatrix},\qquad
  X=\begin{pmatrix} 0&1 \\ 0&0\end{pmatrix},\qquad
  Y=\begin{pmatrix} 0&0 \\ 1&0\end{pmatrix}.
\end{equation*}
Suppose $\mathcal{P}(\lambda,\lambda^{-1})$ is the algebra of
Laurent polynomials in $\lambda$. Denote by
$\widetilde{\mathfrak{g}}$ the algebra
$\mathfrak{sl}(2,\Complex)\otimes
\mathcal{P}(\lambda,\lambda^{-1})$. Then
\begin{equation}\label{basisMKdV&SG}
  H^{2m} = \lambda^m H,\qquad X^{2m+1}=\lambda^m X,\qquad
  Y^{2m+1}=\lambda^{m+1}Y
\end{equation}
is a \emph{basis} in $\widetilde{\mathfrak{g}}$.

Consider the operator
\begin{equation*}
  d = 2\lambda\frac{d}{d\lambda} + \ad_{H};
\end{equation*}
we call it the \emph{operator of principal grading}. It is easy to
prove that the basis elements (\ref{basisMKdV&SG}) are the
eigenvectors of $d$. We call the eigenvalues of $d$ the
\emph{degrees}. The superscripts in the lefthand sides of
(\ref{basisMKdV&SG}) indicate the corresponding principal degrees of
the basis elements. By $\mathfrak{g}_l$, $l\in \Integer$, denote an
eigenspace of principal degree $l$. It is evident that
\begin{equation*}
  \mathfrak{g}_{2m}=\spanOp_{\Complex} \{ H^{2m}\},\qquad
  \mathfrak{g}_{2m+1}=\spanOp_{\Complex} \{X^{2m+1}, Y^{2m+1}\}.
\end{equation*}

Decompose $\widetilde{\mathfrak{g}}$ into two subalgebras
\begin{equation*}
  \widetilde{\mathfrak{g}}_+ = \sum_{l\geqslant
  0}\mathfrak{g}_l,\qquad
  \widetilde{\mathfrak{g}}_- = \sum_{l < 0}\mathfrak{g}_l,\qquad
  \widetilde{\mathfrak{g}} = \widetilde{\mathfrak{g}}_+ +
  \widetilde{\mathfrak{g}}_-.
\end{equation*}
Further, consider the \emph{$\ad$-invariant bilinear forms}
\begin{equation}\label{bilinForms}
  \langle A(\lambda), B(\lambda) \rangle_{k} = \res \lambda^{-k-1} \Tr A(\lambda)
  B(\lambda), \quad A(\lambda),\ B(\lambda)\in
  \widetilde{\mathfrak{g}}, \quad k\in \Integer.
\end{equation}

We use the forms to define the spaces dual to
$\widetilde{\mathfrak{g}}_+$ and $\widetilde{\mathfrak{g}}_-$.
\begin{example}\label{E:k_-1}
Let $k=-1$. We have
\begin{equation}\label{AdjSpaces}
  (\widetilde{\mathfrak{g}}_-)^{\ast} = \widetilde{\mathfrak{g}}_+ +
  \mathfrak{g}_{-1},\qquad  (\widetilde{\mathfrak{g}}_+)^{\ast}=
  \sum_{l\leqslant -2} \mathfrak{g}_l,
\end{equation}
where $(\widetilde{\mathfrak{g}}_-)^{\ast}$ and
$(\widetilde{\mathfrak{g}}_+)^{\ast}$ contain only the nonzero
functionals on $\widetilde{\mathfrak{g}}_{\pm}$.
\end{example}
\begin{example}\label{E:k_N}
Let $k=N\geqslant 0$. Then
\begin{equation*}
  (\widetilde{\mathfrak{g}}_-)^{\ast} = \sum_{l\geqslant 2N+1}
  \mathfrak{g}_{l},\qquad  (\widetilde{\mathfrak{g}}_+)^{\ast}=
  \sum_{l\leqslant 2N} \mathfrak{g}_l.
\end{equation*}
\end{example}
Fix $N\geqslant 0$. Consider $M^{N+1} \subset
\widetilde{\mathfrak{g}}$,  where an element $\widehat{\mu}(\lambda)
\in M^{N+1}$ has the form
\begin{equation*}
  \widehat{\mu}(\lambda) = \begin{pmatrix}
  \alpha(\lambda) & \beta(\lambda) \\ \gamma(\lambda) &
  -\alpha(\lambda) \end{pmatrix}
\end{equation*}
with
\begin{equation*}
  \alpha(\lambda)=\sum_{m=0}^N \lambda^m
  \alpha_{2m},\qquad \beta(\lambda) = \sum_{m=0}^{N+1} \lambda^{m-1}
  \beta_{2m-1},\qquad \gamma(\lambda) = \sum_{m=0}^{N+1} \lambda^{m}
  \gamma_{2m-1}.
\end{equation*}
We call $M^{N+1}$ the $N$-\emph{gap sector of
$\widetilde{\mathfrak{g}}$}, or shortly the \emph{finite gap
sector}.

Because the factor-algebra
$\widetilde{\mathfrak{g}}_-/\sum_{l\leqslant -2N-4} \mathfrak{g}_l$
acts effectively on $M^{N+1}$, the coadjoint action of
$\widetilde{\mathfrak{g}}_-$ with respect to the form $\langle\ , \
\rangle_{-1}$ is well defined on $M^{N+1}$. The same is true for the
coadjoint action of $\widetilde{\mathfrak{g}}_+$ with respect to the
form $\langle\ , \ \rangle_{N}$, indeed, the factor-algebra
$\widetilde{\mathfrak{g}}_+/\sum_{l\geqslant 2N+2} \mathfrak{g}_l$
acts effectively on $M^{N+1}$.

Let $C(M^{N+1})$ be the space of smooth functions on $M^{N+1}$. For
all $f_1$, $f_2\in C(M^{N+1})$  define the \emph{first Lie-Poisson
bracket} by the formula
\begin{gather}\label{LiePoissonBra1}
  \{f_1,f_2\}_{1} = \sum_{m,n=0}^N \sum_{a,b=1}^3 P_{ab}^{mn}(-1)
  \frac{\partial f_1}{\partial \mu_m^a}\frac{\partial
  f_2}{\partial\mu_n^b},
  \intertext{where} P_{ab}^{mn}(-1) = \langle\widehat{\mu}(\lambda),
  [Z_a^{-m-1},Z_b^{-n-1}] \rangle_{-1},\notag \\[10pt]
  Z_1^{m}=H^{m},\qquad Z_2^{m}=Y^{m},\qquad Z_3^{m}=X^{m},\notag\\
  \mu_m^1=\alpha_m,\qquad \mu_m^2=\beta_m,\qquad
  \mu_m^3=\gamma_m.\notag
\end{gather}
With the same notation, define the  \emph{second Lie-Poisson
bracket} by the formula
\begin{gather}\label{LiePoissonBra2}
  \{f_1,f_2\}_2 = \sum_{m,n=0}^N \sum_{a,b=1}^3 P_{ab}^{mn}(N)
  \frac{\partial f_1}{\partial \mu_m^a}\frac{\partial
  f_2}{\partial\mu_n^b},  \intertext{where} P_{ab}^{mn}(N) = \langle\widehat{\mu}(\lambda),
   [Z_a^{-m+N},Z_b^{-n+N}] \rangle_{N}.\notag
\end{gather}

One can see that the functions $\beta_{2N+1}$ and $\gamma_{2N+1}$
\emph{annihilate the bracket (\ref{LiePoissonBra1})}
\begin{equation*}
\{\beta_{2N+1},f\}_1\,{=}\,0,\qquad \{\gamma_{2N+1},f\}_1\,{=}\,0
\qquad \text{for all} \quad f\in C(M^{N+1}).
\end{equation*}
Thus, we can assume without loss of generality that
\begin{equation}\label{MKdVconstrains}
\beta_{2N+1}=\gamma_{2N+1}=const
\end{equation}
and restrict the bracket (\ref{LiePoissonBra1}) to the subspace
$M^{N+1}_{con} \subset M^{N+1}$ with the constraints
(\ref{MKdVconstrains}), clearly, $\dim M^{N+1}_{con} = 3(N+1)$. The
first Lie-Poisson bracket is nondegenerate on $M^{N+1}_{con}$. We
use the set $\gamma_{2m-1}$, $\beta_{2m-1}$, $\alpha_{2m}$, $m=0,\,
1,\, \dots,\, N$, as \emph{coordinate functions} in $M^{N+1}_{con}$.
We call the fixed coordinates $\beta_{2N+1}$, $\gamma_{2N+1}$ the
\emph{external parameters}.

We see that, on one hand, $M^{N+1}_{con} \subset
(\widetilde{\mathfrak{g}}_-)^{\ast}$ with respect to $\langle\ , \
\rangle_{-1}$, see Example~\ref{E:k_-1}, and, on the other hand,
$M^{N+1}_{con}\subset (\widetilde{\mathfrak{g}}_+)^{\ast}$ with
respect to $\langle\ , \ \rangle_{N}$, see Example~\ref{E:k_N}.

In addition to the brackets (\ref{LiePoissonBra1}) and
(\ref{LiePoissonBra2}), one can define $N$ intermediate brackets
with the Poisson tensors
\begin{equation}\label{LiePoissonBras}
 P_{ab}^{mn}(k)=\langle\widehat{\mu}(\lambda),
   [Z_a^{-m+k},Z_b^{-n+k}] \rangle_{k}, \qquad k=0, \ldots, N-1.
\end{equation}

Now, consider the $\ad^{\ast}$-invariant function
\begin{equation*}
I(\lambda)=-\det \widetilde{\mu}(\lambda) = h_{-1} \lambda^{-1} +
h_0 + \cdots + h_{2N+1}\lambda^{2N+1}.
\end{equation*}
Then we have
\begin{equation}\label{InvFuncMKdV}
  h_{\nu}= \sum_{m+n=\nu} \left( \alpha_{2m} \alpha_{2n} +
  \gamma_{2m-1} \beta_{2n-1}\right),\qquad \nu=-1,\, 0,\,
\dots,\, 2N+1.
\end{equation}
The Kostant-Adler scheme \cite{Adler} implies the following
assertions.
\begin{proposition}
All functions $h_{\nu}$, $\nu=-1,\, 0,\, \dots,\, 2N+1$ determined
by \eqref{InvFuncMKdV} mutually commute with  respect to the
 brackets \eqref{LiePoissonBra1},
\eqref{LiePoissonBra2}, and the intermediate brackets with the
Poisson tensors \eqref{LiePoissonBras}.
\end{proposition}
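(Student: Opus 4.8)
The plan is to recognize the functions $h_\nu$ as the Laurent coefficients of the single $\ad^\ast$-invariant generating function $I(\lambda) = -\det\widehat\mu(\lambda)$, and then to apply the Kostant--Adler (Adler--Kostant--Symes) mechanism to each of the brackets in turn. First I would recall that, because $I(\lambda)$ is built from $\Tr\widehat\mu(\lambda)^2$ (equivalently $\det\widehat\mu(\lambda)$), it is invariant under the full coadjoint action of $\widetilde{\mathfrak g}$; in Lax language, the gradient of $I(\lambda)$ with respect to any of the forms $\langle\,,\,\rangle_k$ lies along $\widehat\mu(\lambda)$ itself (up to a power of $\lambda$), so $I(\lambda)$ is a Casimir of the \emph{full} Lie--Poisson structure on $\widetilde{\mathfrak g}^\ast$.

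Next I would invoke the splitting $\widetilde{\mathfrak g} = \widetilde{\mathfrak g}_+ \oplus \widetilde{\mathfrak g}_-$ into the two subalgebras defined via the principal grading. The Kostant--Adler theorem says precisely that, for a Lie algebra that is a vector-space sum of two subalgebras, the Casimirs of the ambient Lie--Poisson bracket restrict to functions that are \emph{in involution} with respect to the modified bracket obtained by projecting onto either summand. Concretely: since $M^{N+1}_{con}$ sits inside $(\widetilde{\mathfrak g}_-)^\ast$ with respect to $\langle\,,\,\rangle_{-1}$ (Example~\ref{E:k_-1}) and inside $(\widetilde{\mathfrak g}_+)^\ast$ with respect to $\langle\,,\,\rangle_{N}$ (Example~\ref{E:k_N}), the first bracket $\{\,,\,\}_1$ is the Lie--Poisson bracket associated to $\widetilde{\mathfrak g}_-$ and $\{\,,\,\}_2$ is the one associated to $\widetilde{\mathfrak g}_+$. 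Applying Kostant--Adler to each gives $\{h_\mu,h_\nu\}_1 = 0$ and $\{h_\mu,h_\nu\}_2 = 0$ for all $\mu,\nu$.

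For the intermediate brackets \eqref{LiePoissonBras} I would argue the same way: for each $k\in\{0,\dots,N-1\}$ the form $\langle\,,\,\rangle_k$ together with the grading decomposition identifies $M^{N+1}$ with a coadjoint orbit region for one of the two subalgebras (the shift $Z_a^{-m+k}$ in the Poisson tensor is exactly the relabelling that implements this), so the Kostant--Adler argument applies verbatim and yields involutivity of the $h_\nu$ with respect to $\{\,,\,\}_k$ as well. A cleaner uniform way to package all of this is the bi-Hamiltonian (Magri) recursion: one checks that the $P(k)$ form a pencil, so that $\{h_\mu, h_\nu\}_{k}$ can be shown to vanish for every $k$ once it vanishes for the two extreme values, using the Lenard--Magri relations $P(k)\,\nabla h_\nu = P(k+1)\,\nabla h_{\nu-1}$ that follow from reading off the coefficients of $dI(\lambda)$ at consecutive powers of $\lambda$.

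The main obstacle, and the point that needs genuine care rather than invocation of a black-box theorem, is bookkeeping at the boundaries: the sector $M^{N+1}$ is a \emph{finite-dimensional truncation} of $\widetilde{\mathfrak g}^\ast$, so one must verify that the coadjoint action descends to the relevant finite-dimensional factor algebras (the paper has already noted that $\widetilde{\mathfrak g}_-/\sum_{l\le -2N-4}\mathfrak g_l$ and $\widetilde{\mathfrak g}_+/\sum_{l\ge 2N+2}\mathfrak g_l$ act effectively), and that the extreme coefficients $h_{-1}$ and $h_{2N+1}$, which involve the frozen external parameters $\beta_{2N+1},\gamma_{2N+1}$ through \eqref{InvFuncMKdV}, still commute with everything after the restriction to $M^{N+1}_{con}$. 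This amounts to checking that no gradient component is ``lost'' in the truncation — i.e. that the truncated gradients $\partial h_\nu/\partial\mu_m^a$ still satisfy the same bracket identities — which is the one computation I would actually carry out in detail.
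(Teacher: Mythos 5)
Your argument is essentially the paper's own: the authors give no detailed proof, stating only that ``the Kostant--Adler scheme implies the following assertions,'' and your proposal is a correct unpacking of exactly that mechanism (the $h_\nu$ as coefficients of the $\ad^{\ast}$-invariant $I(\lambda)=-\det\widehat{\mu}(\lambda)$, involutivity from the splitting $\widetilde{\mathfrak g}=\widetilde{\mathfrak g}_+ \oplus \widetilde{\mathfrak g}_-$ for each form $\langle\ ,\ \rangle_k$). Your added care about the truncation to $M^{N+1}_{con}$ and the pencil/Lenard--Magri packaging of the intermediate brackets goes beyond what the paper records but is consistent with it.
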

\begin{proposition}
The functions $h_{\nu}$, $\nu=N, \dots,\, 2N$, are functionally
independent and annihilate the bracket \eqref{LiePoissonBra1}.

Consider the algebraic variety $\mathcal{O}^N_1\subset
M^{N+1}_{con}$ defined by the set of equations $h_{\nu}=c_{\nu},$
$\nu=N, \ldots, 2N$, where $c_{\nu}$ are arbitrary fixed complex
numbers. $\mathcal{O}^N_1$ is an orbit of coadjoint action of the
subalgebra~$\widetilde{\mathfrak{g}}_-$ and $\dim \mathcal{O}^N_1 =
2(N+1)$.
\end{proposition}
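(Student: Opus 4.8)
The plan is to establish the statement in three steps: that $h_N,\dots,h_{2N}$ annihilate $\{\,,\,\}_1$ (hence are Casimirs of the first Poisson structure on $M^{N+1}_{con}$), that these $N+1$ functions are functionally independent on a dense open subset, and that a generic joint level set is then a single coadjoint orbit of $\widetilde{\mathfrak{g}}_-$, whose dimension is forced by a count.

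For the first step I would compute the $\langle\,,\,\rangle_{-1}$-gradient of $h_\nu$. Writing $I(\lambda)=-\det\widehat{\mu}(\lambda)=\tfrac12\Tr\widehat{\mu}(\lambda)^2$ and differentiating its coefficient $h_\nu$ with respect to the coordinates of $M^{N+1}_{con}$, one obtains, after re-expansion in the basis \eqref{basisMKdV&SG}, precisely the component of $\lambda^{-\nu-1}\widehat{\mu}(\lambda)$ lying in the effectively acting part of $\widetilde{\mathfrak{g}}_-$. The principal degrees appearing in $\widehat{\mu}(\lambda)$ run from $-1$ to $2N+1$, so those in $\lambda^{-\nu-1}\widehat{\mu}(\lambda)$ run from $-2\nu-3$ to $2N-2\nu-1$; for $\nu\geqslant N$ the top is already negative, so nothing of non-negative degree is discarded and the gradient of $h_\nu$ differs from $\lambda^{-\nu-1}\widehat{\mu}(\lambda)$ only by a term lying in the ideal of $\widetilde{\mathfrak{g}}_-$ that acts trivially on $M^{N+1}_{con}$. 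Substituting into $\{h_\nu,f\}_1=\langle\widehat{\mu},[\nabla h_\nu,\nabla f]\rangle_{-1}$, the part coming from $\lambda^{-\nu-1}\widehat{\mu}(\lambda)$ vanishes by cyclicity of the trace (it is $\langle\widehat{\mu},[\lambda^{-\nu-1}\widehat{\mu},\nabla f]\rangle_{-1}$ and $\widehat{\mu}$ commutes with itself), while the part coming from the correction term vanishes because, that term being in the trivially acting ideal, $\langle\widehat{\mu},[\,\cdot\,,Y]\rangle_{-1}$ is identically zero on it for every $Y\in\widetilde{\mathfrak{g}}_-$. Hence $\{h_\nu,f\}_1=0$ for all $f$ and all $\nu=N,\dots,2N$; this is the concrete content of the Kostant--Adler mechanism cited above.

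For functional independence I would use \eqref{InvFuncMKdV}: $\partial h_\nu/\partial\alpha_{2m}=2\alpha_{2(\nu-m)}$ if $0\leqslant\nu-m\leqslant N$ and $0$ otherwise. Listing the rows by $\nu=N,\dots,2N$ and the columns by $m=0,\dots,N$, the Jacobian of $(h_N,\dots,h_{2N})$ in the variables $(\alpha_0,\alpha_2,\dots,\alpha_{2N})$ is lower triangular, the $\nu$-th row having its first non-zero entry in column $m=\nu-N$ equal to $2\alpha_{2N}$; its determinant is therefore $(2\alpha_{2N})^{N+1}$, nonzero on the dense open set $\{\alpha_{2N}\neq 0\}$. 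So $h_N,\dots,h_{2N}$ are functionally independent there.

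Finally, $\dim M^{N+1}_{con}=3(N+1)$, and by the two previous steps $\{\,,\,\}_1$ has $N+1$ independent Casimirs. It remains to show that these exhaust the centre of the Poisson algebra generically, i.e.\ that the tensor $P^{mn}_{ab}(-1)$ has generic rank $2(N+1)$; the natural way is to compute the dimension of the stabilizer of a generic $\widehat{\mu}$ inside the effectively acting part of $\widetilde{\mathfrak{g}}_-$, using the diagonalization of $\widehat{\mu}(\lambda)$ over its spectral curve, and to find it to be $N+1$. Granting this, a level set $\mathcal{O}^N_1=\{h_\nu=c_\nu:\nu=N,\dots,2N\}$ for generic $c_\nu$ is a smooth $2(N+1)$-dimensional submanifold (by independence), it is a union of symplectic leaves of $\{\,,\,\}_1$ (being cut out by Casimirs), and the restricted bracket is non-degenerate there on a dense open set, so each connected component is a single symplectic leaf; a symplectic leaf of a Lie--Poisson structure is a coadjoint orbit, here of $\widetilde{\mathfrak{g}}_-$ acting through $\widetilde{\mathfrak{g}}_-/\sum_{l\leqslant-2N-4}\mathfrak{g}_l$. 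Hence $\dim\mathcal{O}^N_1=2(N+1)$. I expect the main obstacle to be exactly this last point: excluding additional Casimirs, so that a generic joint level set of $h_N,\dots,h_{2N}$ is one orbit and not a union of smaller ones; the rest is bookkeeping with the principal grading.
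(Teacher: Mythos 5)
Your first two steps are sound, and they flesh out exactly what the paper leaves to a citation: the printed text offers nothing beyond the sentence that the Kostant--Adler scheme implies these assertions. Your identification of the gradient is correct in detail: for $\nu\geqslant N$ the element $\lambda^{-\nu-1}\widehat{\mu}(\lambda)$ has top principal degree $2N-2\nu-1\leqslant-1$, so it lies in $\widetilde{\mathfrak{g}}_-$; its truncation to degrees between $-2N-2$ and $-1$ coincides term by term with $\nabla_1 h_\nu$ (the factor $2$ in $\partial h_\nu/\partial\alpha_{2m}=2\alpha_{2(\nu-m)}$ matches the expansion of the diagonal part of $\widehat{\mu}$ as $2\alpha(\lambda)H$), and the discarded part, of degrees $\leqslant-2N-3$, brackets $\widetilde{\mathfrak{g}}_-$ into degrees $\leqslant-2N-4$, which pair to zero with every element of $M^{N+1}$ under $\langle\ ,\ \rangle_{-1}$; hence $\{h_\nu,f\}_1=0$ for all $f$ and $\nu=N,\dots,2N$. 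Likewise the triangular Jacobian in the variables $\alpha_{2m}$ with determinant $\pm(2\alpha_{2N})^{N+1}$ settles functional independence on the dense set $\alpha_{2N}\neq0$.

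The genuine gap is the final assertion, which you grant rather than prove, and which is the main geometric content of the proposition: that the level set $h_\nu=c_\nu$, $\nu=N,\dots,2N$, is a single coadjoint orbit of dimension $2(N+1)$. Two things are missing. First, the claim that the tensor $P^{mn}_{ab}(-1)$ has rank $2(N+1)$ at the relevant points, equivalently that the stabilizer of $\widehat{\mu}$ in the finite-dimensional quotient of $\widetilde{\mathfrak{g}}_-$ acting on $M^{N+1}_{con}$ has dimension exactly $N+1$, is only announced as a plan (diagonalization over the spectral curve) and never carried out; without it one cannot exclude further Casimirs, and the leaves could be strictly smaller than the level sets. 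Second, even granting generic full rank, the step from this to a single orbit does not follow as written: a connected joint level set of Casimirs may contain lower-dimensional leaves in the closure of an open one, so you need the rank to be $2(N+1)$ at every point of the level set (or direct transitivity of the coadjoint action on it), and you also need connectedness of the level set, which you never address. Note that the paper's own formula \eqref{betaSyst} supplies connectedness and smoothness cheaply: for $\gamma_{2N+1}\neq0$ the equations $h_\nu=c_\nu$ are solved linearly for the $\beta_{2m-1}$, exhibiting $\mathcal{O}^N_1$ as a graph over the $(\alpha_{2m},\gamma_{2m-1})$-space; what then remains is precisely the pointwise stabilizer computation you deferred. Since the paper itself gives no proof, your proposal goes further than the printed text, but as a proof of the proposition it is incomplete at exactly the point you yourself flagged.
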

\begin{proposition}
The functions $h_{\nu}$, $\nu=-1, \dots,\, N-1$, are functionally
independent and annihilate the bracket \eqref{LiePoissonBra2}.

Consider the algebraic variety $\mathcal{O}^N_2\subset
M^{N+1}_{con}$ defined by the set of equations $h_{\nu}=c_{\nu},$
$\nu=-1, \ldots, N-1$, where $c_{\nu}$ are arbitrary fixed complex
numbers. $\mathcal{O}^N_2$ is an orbit of coadjoint action of the
subalgebra~$\widetilde{\mathfrak{g}}_+$ and $\dim \mathcal{O}^N_2 =
2(N+1)$.
\end{proposition}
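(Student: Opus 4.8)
The argument mirrors the one for the preceding proposition, with the form $\langle\ ,\ \rangle_{N}$ and the subalgebra $\widetilde{\mathfrak{g}}_+$ playing the roles that $\langle\ ,\ \rangle_{-1}$ and $\widetilde{\mathfrak{g}}_-$ played there, and with the \emph{upper} end of the principal grading replacing the lower one. By Example~\ref{E:k_N} the space $M^{N+1}_{con}$ is contained in $(\widetilde{\mathfrak{g}}_+)^{\ast}$ with respect to $\langle\ ,\ \rangle_{N}$, so $\{\ ,\ \}_2$ is the Lie--Poisson bracket of the coadjoint action of $\widetilde{\mathfrak{g}}_+$; and the $h_\nu$ are the restrictions to $M^{N+1}_{con}$ of the Casimir function $I(\lambda)=-\det\widehat{\mu}(\lambda)=\tfrac12\Tr\widehat{\mu}(\lambda)^2$ of the full algebra $\widetilde{\mathfrak{g}}$, whence Proposition~1 already yields that they commute pairwise with respect to $\{\ ,\ \}_2$.

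First I would compute, relative to $\langle\ ,\ \rangle_{N}$, the gradient of $h_\nu$. Since $\delta I(\lambda)=\Tr\bigl(\widehat{\mu}(\lambda)\,\delta\widehat{\mu}(\lambda)\bigr)$ and $\langle A,B\rangle_{N}$ extracts the coefficient of $\lambda^{N}$ in $\Tr AB$, one obtains $\nabla_{N}h_\nu=\lambda^{N-\nu}\widehat{\mu}(\lambda)$ modulo the annihilator of the action. The decisive remark is a degree count: the matrix $\widehat{\mu}(\lambda)$ has components only in principal degrees $-1,0,1,\dots,2N+1$, and multiplication by $\lambda^{N-\nu}$ raises the principal degree by $2(N-\nu)$; for $\nu=-1,\dots,N-1$ we have $N-\nu\geq 1$, so every component of $\lambda^{N-\nu}\widehat{\mu}(\lambda)$ has degree $\geq 1$, i.e.\ $\lambda^{N-\nu}\widehat{\mu}(\lambda)\in\widetilde{\mathfrak{g}}_+$. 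Hence the differential of $h_\nu$, viewed as a function on $(\widetilde{\mathfrak{g}}_+)^{\ast}$, is $\lambda^{N-\nu}\widehat{\mu}(\lambda)$ itself, and the corresponding Hamiltonian vector field is the coadjoint action of this element on $\widehat{\mu}$, which under the identification is represented by $[\widehat{\mu}(\lambda),\lambda^{N-\nu}\widehat{\mu}(\lambda)]=\lambda^{N-\nu}[\widehat{\mu}(\lambda),\widehat{\mu}(\lambda)]=0$. Thus each $h_\nu$, $\nu=-1,\dots,N-1$, annihilates $\{\ ,\ \}_2$.

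For functional independence I would examine the same differentials as covectors on $M^{N+1}_{con}$, i.e.\ as elements of $\widetilde{\mathfrak{g}}_+\big/\sum_{l\geq 2N+2}\mathfrak{g}_l$. A linear relation $\sum_{\nu=-1}^{N-1}c_\nu\,\lambda^{N-\nu}\widehat{\mu}(\lambda)\in\sum_{l\geq 2N+2}\mathfrak{g}_l$ may be tested on its lowest-degree component: writing $p(\lambda)=\sum_\nu c_\nu\lambda^{N-\nu}$ (whose nonzero exponents lie in $\{1,\dots,N+1\}$) and letting $j_0$ be the smallest exponent that actually occurs, that component arises from the degree $-1$ part $\beta_{-1}X^{-1}+\gamma_{-1}Y^{-1}$ of $\widehat{\mu}$ and sits in principal degree $2j_0-1\leq 2N+1<2N+2$. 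Its vanishing forces $\beta_{-1}=\gamma_{-1}=0$; hence on the dense open subset of $M^{N+1}_{con}$ where $(\beta_{-1},\gamma_{-1})\neq(0,0)$ the covectors $dh_{-1},\dots,dh_{N-1}$ are linearly independent, so the $N+1$ functions $h_{-1},\dots,h_{N-1}$ are functionally independent there.

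Finally, $\dim M^{N+1}_{con}=3(N+1)$, and deleting the $N+1$ independent Casimirs of $\{\ ,\ \}_2$ leaves generic symplectic leaves of dimension $2(N+1)$; by the Kostant--Adler scheme \cite{Adler} these leaves are exactly the coadjoint orbits of $\widetilde{\mathfrak{g}}_+$ lying in $M^{N+1}_{con}$, so for generic $c_{-1},\dots,c_{N-1}$ the variety $\mathcal{O}^N_2=\{h_\nu=c_\nu\}$ is one such orbit and $\dim\mathcal{O}^N_2=2(N+1)$. The step that needs the most care is the verification that $h_{-1},\dots,h_{N-1}$ is a \emph{complete} set of Casimirs of $\{\ ,\ \}_2$ — equivalently, that the Poisson tensor $P^{mn}_{ab}(N)$ has rank exactly $2(N+1)$ on a dense open subset of $M^{N+1}_{con}$; this can be settled either by a direct rank computation or, more economically, by appealing to the duality with the first bracket used in the previous proposition, whose orbit $\mathcal{O}^N_1$ has the same dimension $2(N+1)$.
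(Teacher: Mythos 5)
The paper offers no argument for this proposition beyond attributing Propositions 1--3 wholesale to the Kostant--Adler scheme of \cite{Adler}, so your proposal is an explicit implementation of the route the paper merely cites, and its first three paragraphs are correct. The identification of the gradient of $h_\nu$ with respect to $\langle\ ,\ \rangle_{N}$ as $\lambda^{N-\nu}\widehat{\mu}(\lambda)$ holds modulo an element of $\sum_{l\geqslant 2N+2}\mathfrak{g}_l$, which pairs to zero with the tangent space of $M^{N+1}_{con}$ and acts trivially, so it drops out of \eqref{LiePoissonBra2}; the degree count $-1+2(N-\nu)\geqslant 1$ places this representative in $\widetilde{\mathfrak{g}}_+$, and then $\{f,h_\nu\}_2=\langle\widehat{\mu},[\nabla f,\lambda^{N-\nu}\widehat{\mu}]\rangle_N=0$ by invariance of the trace form — this is exactly the AKS argument. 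Your lowest-degree test is also sound: the degree $2j_0-1\leqslant 2N+1$ component of $p(\lambda)\widehat{\mu}$ comes only from the degree $-1$ part of $\widehat{\mu}$, so the differentials are independent wherever $(\beta_{-1},\gamma_{-1})\neq(0,0)$. (You correctly read $h_\nu$ as the coefficient of $\lambda^\nu$ in $-\det\widehat{\mu}=\alpha^2+\beta\gamma$; as literally written, \eqref{InvFuncMKdV} has the $\gamma\beta$ indices shifted by one and would make $h_{-1}$ vanish, so your interpretation is the right one.)

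The genuine gap is in your last paragraph, i.e.\ in precisely the part of the statement that goes beyond Casimirs and independence: that the level set $\{h_\nu=c_\nu,\ \nu=-1,\dots,N-1\}$ is a coadjoint orbit of $\widetilde{\mathfrak{g}}_+$ of dimension $2(N+1)$. You correctly isolate the missing ingredient — that the corank of $P^{mn}_{ab}(N)$ is exactly $N+1$ at generic points — but you do not establish it, and the ``more economical'' route you propose is not an argument: $\dim\mathcal{O}^N_1=2(N+1)$ concerns the tensor of \eqref{LiePoissonBra1}, and the relation $\{\,\cdot\,,h_\nu\}_1=-\{\,\cdot\,,h_{\nu+N+1}\}_2$ matches individual Hamiltonian vector fields, not the ranks of the two tensors (members of a Poisson pencil need not have equal rank). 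What is actually required is the generic-stabilizer computation for the effectively acting quotient $\widetilde{\mathfrak{g}}_+\big/\sum_{l\geqslant 2N+2}\mathfrak{g}_l$, of dimension $3(N+1)$: one must show that $[\widehat{\mu},\xi]$ orthogonal to $\widetilde{\mathfrak{g}}_+$, i.e.\ $[\widehat{\mu},\xi]\in\sum_{l\geqslant 2N+1}\mathfrak{g}_l$, forces the class of $\xi$ into the $(N+1)$-dimensional span of the classes of $\lambda^{N-\nu}\widehat{\mu}$; this is the content of the theorem of \cite{Adler} that the paper invokes and that you should either prove or cite explicitly. Moreover, even with the rank settled, equality of dimensions only makes each orbit open in the level set; to conclude the level set \emph{is} one orbit you still need a connectedness or transitivity argument (for instance via the elimination of the $\beta_{2m-1}$ analogous to \eqref{betaSyst}, which exhibits the generic level set as a graph over the $(\alpha,\gamma)$ coordinates where $\gamma_{-1}\neq0$), a step addressed neither in your proposal nor in the paper.
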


It is obvious that the orbits $\mathcal{O}^N_1$ and
$\mathcal{O}^N_2$ are the \emph{symplectic leaves} with respect to
the first and the second Lie-Poisson brackets, accordingly.

Further, the functions $h_{-1}$, $h_{0}$, \ldots, $h_{N-1}$,
regarded as Hamiltonians with respect to the first Lie-Poisson
bracket, generate non-trivial flows on $M^{N+1}_{con}$
\begin{equation}\label{FlowEqMKdV}
  \frac{\partial \mu_m^a}{\partial \tau_{\nu}} =
  \{\mu_m^a,h_{\nu}\}_{1},\quad \nu=-1,\, 0,\, \dots,\, N-1.
\end{equation}
The equations (\ref{FlowEqMKdV}) can be written with the help of the
second Lie-Poisson bracket and the functions $h_{N}$, \ldots,
$h_{2N}$ regarded as Hamiltonians. Namely (see \cite{Holod85}), one
has
\begin{equation*}
  \{\mu_m^a,h_{\nu}\}_{1} = -\{\mu_m^a,h_{\nu+N+1}\}_{2}.
\end{equation*}

\begin{proposition}
The system \eqref{FlowEqMKdV} reduced to the orbit $\mathcal{O}^N_1$
is equivalent to the finite gap complex MKdV hierarchy.

The system \eqref{FlowEqMKdV} reduced to the orbit $\mathcal{O}^N_2$
is equivalent to finite gap sin(sinh)-Gordon equation.
\end{proposition}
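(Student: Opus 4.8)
The plan is to put the reduced flows into zero-curvature (Lax) form, to reconstruct from the lowest flow the field-theoretic (jet) coordinates on each orbit, and then to match the result term by term, in the principal gradation, with the known auxiliary linear problems of the two hierarchies, as essentially established in \cite{Holod83,Holod85}. \emph{Lax form.} By the Kostant--Adler scheme the Hamiltonian vector fields of the spectral invariants $h_\nu$ are of Lax type. Since $I(\lambda)=-\det\widehat\mu(\lambda)=\tfrac12\Tr\widehat\mu(\lambda)^2$, the differential of $I$ with respect to the trace pairing is $\widehat\mu(\lambda)$ itself, so the $\langle\,\cdot\,,\,\cdot\,\rangle_{-1}$-gradient of $h_\nu$ is the $\widetilde{\mathfrak g}_-$-part of $\lambda^{-\nu-1}\widehat\mu(\lambda)$, and \eqref{FlowEqMKdV} on $M^{N+1}_{con}$ becomes
\begin{equation*}
\partial_{\tau_\nu}\widehat\mu(\lambda)=\bigl[\,\widehat\mu(\lambda),\ \bigl(\lambda^{-\nu-1}\widehat\mu(\lambda)\bigr)_-\,\bigr]=-\bigl[\,\widehat\mu(\lambda),\ \bigl(\lambda^{-\nu-1}\widehat\mu(\lambda)\bigr)_+\,\bigr],\qquad \nu=-1,\dots,N-1.
\end{equation*}
One checks that \eqref{MKdVconstrains} and the level sets $h_\nu=c_\nu$ are preserved, so the equations descend to $\mathcal{O}^N_1$ and, by the identity $\{\,\cdot\,,h_\nu\}_1=-\{\,\cdot\,,h_{\nu+N+1}\}_2$, also to $\mathcal{O}^N_2$; and the involutivity of the $h_\nu$ (first Proposition) is equivalent to the mutual zero-curvature relations $\partial_{\tau_j}A_k-\partial_{\tau_k}A_j+[A_k,A_j]=0$ for the generators $A_\nu:=(\lambda^{-\nu-1}\widehat\mu)_\pm$, so it suffices to identify the individual members.

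\emph{The MKdV hierarchy on $\mathcal{O}^N_1$.} A direct computation of $(\lambda^{-N}\widehat\mu)_+$, using the constraint $\beta_{2N+1}=\gamma_{2N+1}$, shows that the $\widetilde{\mathfrak g}_+$-generator of the flow $\tau_{N-1}$ has the form $q\,H^0+\beta_{2N+1}(X^1+Y^1)$ with $q:=2\alpha_{2N}$, which is precisely the MKdV auxiliary matrix in the principal gradation. Taking this flow as the spatial variable $x$ and reading $\partial_x\widehat\mu=[\widehat\mu,\,q\,H^0+\beta_{2N+1}(X^1+Y^1)]$ degree by degree produces a recursion expressing every coordinate $\gamma_{2m-1},\beta_{2m-1},\alpha_{2m}$ as a differential polynomial in $q$ and its $x$-derivatives, with the $c_\nu$ and the external parameters as integration constants; this realises $\mathcal{O}^N_1$ as the $2(N+1)$-dimensional finite gap jet space of a single field taken up to order $2N+1$. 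Substituting the recursion into the next flow $t=\tau_{N-2}$ collapses the corresponding Lax equation --- equivalently, the $\mathfrak{sl}(2,\Complex)$-valued zero-curvature condition --- to the complex MKdV equation for $q$, the remaining flows $\tau_{N-3},\dots,\tau_{-1}$ supplying its higher symmetries; commutativity assembles them into the finite gap MKdV hierarchy, with $\widehat\mu(\lambda)$ its stationary solution and the spectral curve $\mu^2=I(\lambda)$.

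\emph{The sin(sinh)-Gordon equation on $\mathcal{O}^N_2$.} On this orbit the Casimir relation $h_{-1}=c_{-1}$, that is $\alpha_0^2+\beta_{-1}\gamma_{-1}=c_{-1}$, confines the lowest coordinates to a fixed quadric, parametrised by an angle as $\alpha_0\sim\cos\varphi$, $\beta_{-1}\gamma_{-1}\sim\sin^2\varphi$ (respectively $\cosh\varphi$, $\sinh^2\varphi$) --- the exponential-of-the-field parametrisation. One takes as the two light-cone variables $(\xi,\eta)$ the flow whose generator sits at the top of the gradation (again of the type $q\,H^0+\mathrm{const}\cdot(X^1+Y^1)$) and the flow whose generator lives at the bottom of the gradation (the $\lambda^{-1}$-type Lax matrix built from $\alpha_0,\beta_{-1},\gamma_{-1}$); performing the jet reconstruction as above and writing out their compatibility collapses to $\varphi_{\xi\eta}=\sin\varphi$ (respectively $\sinh\varphi$), the remaining $N-1$ flows being internal. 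Comparison with \cite{Holod83,Holod85} completes the identification.

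\emph{Main obstacle.} The technical heart is the jet reconstruction: proving that the degree-by-degree reading of the spatial (respectively light-cone) Lax equation closes into an \emph{invertible} differential recursion, so that $\mathcal{O}^N_1$ and $\mathcal{O}^N_2$ genuinely are the finite gap jet spaces, and, on $\mathcal{O}^N_2$, correctly pinning down the trigonometric/hyperbolic parametrisation forced by $h_{-1}=c_{-1}$ while tracking how the opposite ends of the gradation conspire to yield the $\sin\varphi$ ($\sinh\varphi$) nonlinearity. Once this is in place, matching signs and normalisations with the textbook forms of complex MKdV and sin(sinh)-Gordon is routine.
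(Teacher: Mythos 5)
Your overall strategy (Kostant--Adler Lax form, zero-curvature compatibility, reduction to the orbits, jet reconstruction from the lowest flows) is the paper's own route, and the MKdV half is essentially correct: the paper writes the $\tau_{N-1}$-generator as $\nabla_2 h_{2N}=2\alpha_{2N}H+\beta_{2N+1}X^{1}+\gamma_{2N+1}Y^{1}$, solves the $\nu=N-1$ flow on $\mathcal{O}_1^N$ for $\beta_{2N-1},\gamma_{2N-1},\alpha_{2N-2}$ as differential polynomials in $\alpha_{2N}$, and combines this with $\partial_t\alpha_{2N}=\partial_x\alpha_{2N-2}$ to get complex MKdV --- exactly your ``degree-by-degree recursion''. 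One small inaccuracy: the constraint \eqref{MKdVconstrains} fixes $\beta_{2N+1}$ and $\gamma_{2N+1}$ as two independent constants, it does not set them equal; this is harmless, but your matrix $q\,H+\beta_{2N+1}(X^{1}+Y^{1})$ should read $q\,H+\beta_{2N+1}X^{1}+\gamma_{2N+1}Y^{1}$.

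The sin(sinh)-Gordon half, however, contains a genuine error. In the principal grading used here the degree $-1$ eigenspace is spanned by $X^{-1}$ and $Y^{-1}$ only (there is no Cartan element at odd degree), so by \eqref{InvFuncMKdV} the lowest Casimir is $h_{-1}=\gamma_{-1}\beta_{-1}$; the term $\alpha_0^2$ belongs to $h_0$, not $h_{-1}$. Consequently your constraint ``$\alpha_0^2+\beta_{-1}\gamma_{-1}=c_{-1}$'', the trigonometric quadric parametrisation $\alpha_0\sim\cos\varphi$, $\beta_{-1}\gamma_{-1}\sim\sin^2\varphi$, and the inclusion of $\alpha_0$ in the bottom Lax matrix are all wrong: the gradient of $h_{-1}$ is $\beta_{-1}X^{-1}+\gamma_{-1}Y^{-1}$, with no diagonal part. (You appear to have imported the homogeneous-grading formulas of Sections 3--5, where $h_0=\alpha_0^2+\beta_0\gamma_0$.) The paper instead parametrises the base of $\mathcal{O}_2^N$ exponentially via $G_0=\exp\mathfrak{g}_0$, namely $\gamma_{-1}=\sqrt{h_{-1}}\,e^{u}$, $\beta_{-1}=\sqrt{h_{-1}}\,e^{-u}$; the stationary flow $x=\tau_{N-1}$ then gives \eqref{SG0}, $\alpha_{2N}=\tfrac12 u_x$, and the evolutionary flow generated by $h_N$ with respect to the second bracket gives \eqref{shGordonEq}, whose combination is sinh-Gordon; the trigonometric (sin-Gordon) form appears only after imposing the $\mathfrak{su}(2)$ reality conditions $\gamma_{-1}=-ire^{iu}$, $\beta_{-1}=-ire^{-iu}$, not from the shape of the Casimir. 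As written, your compatibility computation would not close to the $\sinh u$ (resp.\ $\sin u$) nonlinearity, so this step needs to be redone along the paper's lines.
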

Below we give the outline of the proof which may be found in full
detail in \cite{Bern}.

First, rewrite (\ref{FlowEqMKdV}) in matrix form
\begin{gather}
  \frac{\partial \widehat{\mu}(\lambda)}{\partial \tau_{\nu}}
  =[\nabla_2 h_{\nu+N+1}, \widehat{\mu}(\lambda)]=
  [\widehat{\mu}(\lambda),\nabla_1 h_{\nu}], \label{Laks}
  \intertext{where}  \nabla_1 h = \sum_{m=0}^N \left(
  \frac{\partial h}{\partial \alpha_{2m}}\,H^{-2m-2} +
  \frac{\partial h}{\partial \beta_{2m-1}}\,Y^{-2m-1} +
  \frac{\partial h}{\partial \gamma_{2m-1}}\,X^{-2m-1}
  \right),\notag\\
  \nabla_2 h = \sum_{m=0}^N \left(
  \frac{\partial h}{\partial \alpha_{2m}}\,H^{-2m+2N} +
  \frac{\partial h}{\partial \beta_{2m-1}}\,Y^{-2m+2N+1} +
  \frac{\partial h}{\partial \gamma_{2m-1}}\,X^{-2m+2N+1}
  \right).\notag
\end{gather}
The hamiltonian flows along $\tau_{\nu}$ and $\tau_{\nu'}$ commute,
which implies the compatibility condition in the form of \emph{zero
curvature equations}. In particular, assigning $\tau_{N-1}=x$,
$\tau_{N-2}=t$, we obtain
\begin{equation*}
  \frac{\partial \nabla_2 h_{2N}}{\partial t} -
  \frac{\partial \nabla_2 h_{2N-1}}{\partial x} + [\nabla_2 h_{2N},
  \nabla_2 h_{2N-1}]=0,
\end{equation*}
where
\begin{gather*}
  \nabla_2 h_{2N} = \begin{pmatrix} \alpha_{2N} & \beta_{2N+1} \\
  \lambda \gamma_{2N+1} & -\alpha_{2N} \end{pmatrix},  \\
  \nabla_2 h_{2N-1} = \begin{pmatrix} \alpha_{2N-2}+\lambda\alpha_{2N} &
  \beta_{2N-1}+\lambda \beta_{2N+1} \\
  \lambda \gamma_{2N-1} +\lambda^2 \gamma_{2N+1} & -(\alpha_{2N-2}+
  \lambda\alpha_2N) \end{pmatrix}.
\end{gather*}
Recall that $\beta_{2N+1}$ and $\gamma_{2N+1}$ are the fixed
external parameters.

The reduction of (\ref{FlowEqMKdV}) onto the orbit $\mathcal{O}^N_1$
gives the equation
\begin{equation}\label{MKdVEqpure}
\frac{\partial \alpha_{2N}}{\partial t} = \frac{\partial
\alpha_{2N-2}}{\partial x},
\end{equation}
equivalent to MKdV equation with respect to the function
$\alpha_{2N}(x,t)=u(x,t)$. Indeed, reducing the equation
(\ref{FlowEqMKdV}) as $\nu=N-1$ to the orbit $\mathcal{O}^N_1$  we
obtain
\begin{subequations}\label{coords}
\renewcommand{\theequation}{\theparentequation\alph{equation}}
\begin{gather}
  \beta_{2N-1}=\frac{1}{2\beta_{2N+1}} \left(c_{2N}-\frac{\partial \alpha_{2N}}
  {\partial x} - \alpha^2_{2N}\right), \\
  \gamma_{2N-1}=\frac{1}{2\beta_{2N+1}} \left(c_{2N}+\frac{\partial \alpha_{2N}}
  {\partial x} - \alpha^2_{2N}\right), \\
  \alpha_{2N-2}=\frac{1}{4\beta_{2N+1}} \left(\frac{\partial^2 \alpha_{2N}}
  {\partial x^2} - 2\alpha^3_{2N} + 2c_{2N}\alpha_{2N}\right).
  \label{alpha}
\end{gather}
\end{subequations}
It is readily seen that combining \eqref{MKdVEqpure} and
\eqref{alpha} we get the complex MKdV equation. Two real subalgebras
$\mathfrak{su}(2)$ and $\mathfrak{su}(1,1)\cong
\mathfrak{sl}(2,\Real)$ of $\mathfrak{sl}(2,\Complex)$ give rise to
two real MKdV equations (the so-called $\pm$MKdV).

In the same time, the reduction of (\ref{FlowEqMKdV}) onto the orbit
$\mathcal{O}^N_2$ leads to sin(sinh)-Gordon equation. Let
$\mathcal{O}_2^N \cup \mathfrak{g}_{-1}$ be the \emph{base} for the
orbit $\mathcal{O}^N_2$. The 1-parameter subgroup $G_0 = \exp
\mathfrak{g}_0$ parametrizes the base in a natural way
\begin{equation*}
  \gamma_{-1} = \sqrt{h_{-1}}\, e^{u},\qquad
  \beta_{-1} = \sqrt{h_{-1}}\, e^{-u}.
\end{equation*}
Then the equations (\ref{FlowEqMKdV}) imply
\begin{equation}\label{SG0}
  \alpha_{2N}
   = \frac{1}{2}\, \frac{\partial}{\partial x}\, u,
\end{equation}
where $x\equiv \tau_{N-1}$ as above; the corresponding flow is
called \emph{stationary}. The Hamiltonian $h_N$ gives rise to an
\emph{evolutionary} flow. In the case of the subalgebra
$\mathfrak{sl}(2,\Real)$ we have
\begin{equation}\label{shGordonEq}
  \frac{\partial \alpha_{2N}}{\partial t}
  = 2\beta_{2N+1}  \sqrt{h_{-1}}\, \sinh u.
\end{equation}
Combining (\ref{SG0}) and (\ref{shGordonEq}) we obtain sinh-Gordon
equation.

In the case of the subalgebra $\mathfrak{su}(2)$ we have to assign
$\alpha_{2m} = ia_{2m}$, $a_{2m}\in \Real$, and $\gamma_{2m-1} =
{-}\beta_{2m-1}^{\ast}$, therefore $\beta_{2N+1}=\gamma_{2N+1}=ib$,
$\gamma_{-1}=-ire^{iu}$, $\beta_{-1}=-ire^{-iu}$. Then we come to
sin-Gordon equation
\begin{equation*}
  \frac{\partial^2 u}{\partial t \partial x} = 4rb\sin u.
\end{equation*}

\section{Separation of variables
for MKdV and sin(sinh)-Gordon equation}\label{s:MKdVSoV}

\begin{definition}
Suppose we have the variables $(\lambda_k,w_k)$,
$k=1,\dots,N+1$, such that
\begin{enumerate}
\renewcommand{\labelenumi}{(\roman{enumi})}
\item they are quasi-canonically conjugate, that is
\begin{equation*}
  \{\lambda_k, w_l\}_1 = f(\lambda_k) \delta_{kl}, \qquad
  \{\lambda_k,\lambda_l\}_1=\{w_k,w_l\}_1=0,
\end{equation*}
where $f(\lambda)$ is an arbitrary smooth function;
\item they reduce Liouville 1-form\footnote{We call $\Omega$ Liouville 1-form if
$d\Omega=\omega$, where $\omega$ is a symplectic 2-form.} to a sum
of meromorphic differentials on the corresponding Riemannian
surface.
\end{enumerate}
We call $(\lambda_k,w_k)$, $k=1,\dots,N+1$, \emph{separation
variables}.
\end{definition}

Consider the orbit $\mathcal{O}_1^N$, $\dim \mathcal{O}_1^N =
2(N+1)$. One can parameterize the orbit using any subset of $2(N+1)$
variables from $\{\alpha_{2m},\, \beta_{2m-1}, \gamma_{2m-1}\}$,
$m=0, 1, \ldots, N$. The most natural way to obtain the
parameterization is to eliminate one of the subsets
$\{\beta_{2m-1}\}$ or $\{\gamma_{2m-1}\}$. The reason for this is
the nilpotency of the basis elements that correspond to the subsets.

Note that the correspondence between the elimination variables,
which we chose to parameterize the orbit, and the nilpotent elements
of the basis of the algebra is a crucial feature of our scheme and
applies to all examples.

We chose to parameterize the orbit $\mathcal{O}_1^N$ by the
variables $\{\gamma_{2m-1}, \alpha_{2m}\}$, $m=0,1, \ldots, N$, that
is we eliminate the set $\{\beta_{2m-1}\}$. From the orbit equations
we find
\begin{equation}\label{betaSyst}
  \beta_{2m-1} = \sum_{j=0}^{N+1} (\Gamma^{+})^{-1}_{mj} (c_{N+j}-A_{N+j}),\
  m=0,\,\dots\, N+1,\quad c_{2N+1}=\beta_{2N+1}\gamma_{2N+1},
\end{equation}
where
\begin{equation*}
   \Gamma^{+}=\begin{bmatrix} \gamma_{2N+1} & \gamma_{2N-1}
   & \dots & \gamma_1 & \gamma_{-1} \\
   0 & \gamma_{2N+1} & \dots & \gamma_3 &
   \gamma_{1} \\ \vdots & \vdots& \ddots& \vdots& \vdots \\
   0 & 0 & \dots & \gamma_{2N+1} &
   \gamma_{2N-1}\\ 0 & 0 & \dots & 0 &  \gamma_{2N+1}
   \end{bmatrix} \qquad \text{and} \qquad A_{\nu} =
   \sum_{\substack{m+n=\nu,\\ 0\leqslant m,n \leqslant N}} \alpha_{2m}\alpha_{2n}.
\end{equation*}
Now, using the parameterization (\ref{betaSyst}), we find
expressions for the Hamiltonians $h_{-1}, h_{0}$, $\ldots$,
$h_{N-1}$
\begin{equation}\label{HamiltMKdV}
   h_{n-1} = \sum_{m,j=0}^{N+1}\Gamma^{-}_{nm}(\Gamma^{+})^{-1}_{mj} (c_{N+j}-A_{N+j}) + A_{n-1},
   \quad n=0,\, \dots N,
\end{equation}
where
\begin{equation*}
   \Gamma^{-} = \begin{bmatrix} \gamma_{-1} & 0 & \dots & 0 & 0 \\
   \gamma_{1} & \gamma_{-1} & \dots & 0 & 0 \\
   \vdots & \vdots& \ddots& \vdots& \vdots \\
   \gamma_{2N-1} & \gamma_{2N-3} & \dots & \gamma_{-1}& 0
   \end{bmatrix}.
\end{equation*}
Note that the expressions (\ref{HamiltMKdV}) are linear in
$c_{\nu}$, $\nu=N, \ldots, 2N+1$.

Clearly, one can obtain an analogous parametrization of the orbit
$\mathcal{O}_2^N$ by the set $\{\alpha_{2m},\gamma_{2m-1}\}$,
$m=0,\ldots, N$.

To proceed we need to define the \emph{characteristic polynomial}
\begin{equation*}
  Q(\varkappa,\lambda) = \det\bigl(\mu(\lambda)-\varkappa \cdot I\bigr),
\end{equation*}
where $I$ denotes $2 \times 2$ identity matrix. By the substitution
$\varkappa=w \lambda^{-1}$ the equation $Q(\varkappa,\lambda)=0$
becomes transformed into the standard equation of a hyperelliptic
curve of genus $N+1$
\begin{equation}\label{CharPolyMKdV}
  P(w,\lambda)=\lambda^2Q(w\lambda^{-1},\lambda)=w^2 - \lambda(h_{-1}+h_0\lambda + \cdots+ h_{2N+1}
  \lambda^{2N+2})=0.
\end{equation}

Recall that on the orbit $\mathcal{O}_1^N$ we have
$h_{\nu}=c_{\nu}$, $\nu=N$, \ldots, $2N$. Denote by $(w_k,\,
\lambda_k)$ a root of $P(w,\lambda)$ on the orbit, that is
\begin{equation}\label{HyperCurveMKdV}
  w_k^2 = \lambda_k (h_{-1}+h_0\lambda_k + \cdots h_{N-1}\lambda_k^{N} +
  c_N \lambda_k^{N+1} + c_{N+1} \lambda_k^{N+2} + \cdots + c_{2N+1} \lambda_k^{2N+2}).
\end{equation}
We proceed to show that the set $\{(w_k,\, \lambda_k)\}$, $k=1,
\ldots, N+1$, defines another parametrization of the orbit
$\mathcal{O}_1^N$. We have to find the explicit relation between the
sets $\{(w_1,\lambda_1),\ldots,$ $(w_{N+1},\lambda_{N+1})\}$ and
$\{\alpha_{0},\alpha_{2},\ldots,\alpha_{2N},
\gamma_{-1},\gamma_{1},\ldots,\gamma_{2N-1}\}$.

Solving (\ref{HyperCurveMKdV}) for the Hamiltonians $h_{-1}, h_0,
\ldots, h_{N-1}$ one gets
\begin{equation}\label{hSyst}
  \begin{array}{l} h_{-1} = \displaystyle\frac{1}{W}
  [W_1\left(\textstyle\frac{w^2}{\lambda}\right)
  -c_N W_1(\lambda^{N+1}) - \cdots - c_{2N+1}
  W_1(\lambda^{2N+2})]\\
  h_{0} = \displaystyle\frac{1}{W}
  [W_2\left(\textstyle\frac{w^2}{\lambda}\right)
  -c_N W_2(\lambda^{N+1}) - \cdots - c_{2N+1}
  W_2(\lambda^{2N+2})]\\ .\ .\ .\ .\ .\ .\ .\ .\ .\ .\ .\ .\ .\ .\ .\ .\
  .\ .\ .\ .\ .\ .\ .\ .\ .\ .\ .\ .\ .\ .\ .\ .\ .\ .\ .\ .\ .\ .\ .\ .\ \\
  h_{N-1} = \displaystyle\frac{1}{W}
  [W_{N+1}\left(\textstyle\frac{w^2}{\lambda}\right)
  -c_N W_{N+1}(\lambda^{N+1}) - \cdots - c_{2N+1}
  W_{N+1}(\lambda^{2N+2})].
  \end{array}
\end{equation}
where $W = \prod(\lambda_i-\lambda_j)$ is Vandermonde determinant of
$\lambda_1$, $\lambda_2$, \ldots, $\lambda_{N+1}$. By
$W_i(f(\lambda,w))$ we denote the determinant of Vandermonde matrix
with the $i$-th column replaced by  $\bigl(f(\lambda_1, w_1), \dots,
f(\lambda_{N+1}, w_{N+1})\bigr)^t$.

On the orbit the formulas (\ref{HamiltMKdV}) and (\ref{hSyst})
define the same set of functions. We see that  both
(\ref{HamiltMKdV}) and (\ref{hSyst}) are linear in $c_{\nu}$,
$\nu=N, \ldots, 2N+1$. As $\{c_{\nu}\}$ is the set of
\emph{independent} parameters one can equate the corresponding
terms. Namely, we get
\begin{equation*}
  \frac{\gamma_{2m-1}}{\gamma_{2N+1}} =
  -\frac{W_{m+1}(\lambda^{N+1})}{W}, \quad m=0,\,\dots N.
\end{equation*}
This implies that the set $\{\lambda_k\}$ is, in fact, the set of
roots of the polynomial $\gamma(\lambda)$
\begin{equation*}
   \gamma(\lambda_k)=0,
\end{equation*}
while the variables $\{w_k\}$ satisfy the equalities
\begin{equation*}
w_k^2 = \lambda_k^2
\left(\alpha^2(\lambda_k)-\gamma(\lambda_k)\beta(\lambda_k)\right) =
\lambda_k^2 \alpha^2(\lambda_k),\qquad k=1,\dots,N+1.
\end{equation*}

\begin{theorem}\label{T:SVMKdV}
Suppose the orbit $\mathcal{O}_1^N$ has the coordinates
$(\alpha_{2m},\, \gamma_{2m-1})$, $m=0, 1, \ldots, N$, as above.
Then the new coordinates $(\lambda_k, w_k)$, $k=1,\dots,N+1$,
defined by the formulas
\begin{equation}
  \gamma(\lambda_k)=0,\qquad
  w_k = \varepsilon \lambda_k \alpha(\lambda_k),
  \qquad \text{where} \quad \varepsilon^2=1,\label{newvar1}
\end{equation}
have the following properties:
\begin{enumerate}
\item[\textup{(1)}]  a pair $(w_k,\lambda_k)$ is a root of the characteristic
polynomial \eqref{CharPolyMKdV}.
\item[\textup{(2)}] a pair
$(\lambda_k, w_k)$ is quasi-canonically conjugate with respect to
the first Lie-Poisson bracket \eqref{LiePoissonBra1}:
\begin{equation}\label{PoissonBra1}
  \{\lambda_k,\lambda_l\}_1=0, \qquad
  \{\lambda_k, w_l\}_1 = \varepsilon\lambda_k \delta_{kl}, \qquad \{w_k,w_l\}_1=0;
\end{equation}
\item[\textup{(3)}] the corresponding  Liouville 1-form is
\begin{align*}
&\Omega_{-1}=\sum\limits_{k}\varepsilon\lambda_{k}^{-1}w_{k}\,d\lambda_{k}.&
\end{align*}
\end{enumerate}
\end{theorem}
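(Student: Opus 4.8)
The plan is to establish the three claims essentially in the order stated, since each feeds into the next. Claim (1) is almost immediate: by construction $\lambda_k$ is a root of $\gamma(\lambda)$, so on the orbit $\mathcal{O}_1^N$ the value of $-\det\widehat{\mu}(\lambda_k)$ equals $\alpha^2(\lambda_k)-\gamma(\lambda_k)\beta(\lambda_k)=\alpha^2(\lambda_k)$; multiplying by $\lambda_k^2$ and using $w_k=\varepsilon\lambda_k\alpha(\lambda_k)$ together with $\varepsilon^2=1$ gives $w_k^2=\lambda_k^2\alpha^2(\lambda_k)=-\lambda_k^2\det\widehat{\mu}(\lambda_k)=\lambda_k(h_{-1}+h_0\lambda_k+\cdots+h_{2N+1}\lambda_k^{2N+2})$, which is exactly $P(w_k,\lambda_k)=0$ from \eqref{CharPolyMKdV}. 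I would also remark here that $(\lambda_k,w_k)$, $k=1,\dots,N+1$, is a genuine (local) coordinate system on $\mathcal{O}_1^N$: this was already argued in the text via the linearity in the $c_\nu$ of both \eqref{HamiltMKdV} and \eqref{hSyst}, and the fact that $\dim\mathcal{O}_1^N=2(N+1)$.

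For claim (2), the bracket $\{\lambda_k,\lambda_l\}_1=0$ is the cleanest piece: the $\lambda_k$ are symmetric functions of the coefficients $\gamma_{2m-1}/\gamma_{2N+1}$, and since the $\gamma$'s form one of the two nilpotent-type coordinate families, they Poisson-commute among themselves with respect to the first Lie-Poisson bracket $\{\cdot,\cdot\}_1$ (one reads this off from $P_{ab}^{mn}(-1)=\langle\widehat\mu,[Z_a^{-m-1},Z_b^{-n-1}]\rangle_{-1}$ with $a,b$ both labeling the $X$-generator, since $[X,X]=0$ and the pairing only couples $X$ with $Y$, $H$ with $H$). Hence any two functions of the $\gamma$'s alone commute, so $\{\lambda_k,\lambda_l\}_1=0$. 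For the mixed bracket, I would differentiate the defining relation $\gamma(\lambda_k)=\sum_{m}\gamma_{2m-1}\lambda_k^{m}=0$: taking $\{\cdot,w_l\}_1$ of both sides gives $\gamma'(\lambda_k)\{\lambda_k,w_l\}_1=-\sum_m\{\gamma_{2m-1},w_l\}_1\lambda_k^{m}$. Since $w_l=\varepsilon\lambda_l\alpha(\lambda_l)$ and $\{\gamma_{2m-1},\lambda_l\}_1=0$, the right-hand side reduces to $-\varepsilon\lambda_l\sum_m\{\gamma_{2m-1},\alpha(\lambda_l)\}_1\lambda_k^{m}=-\varepsilon\lambda_l\sum_{m,n}\{\gamma_{2m-1},\alpha_{2n}\}_1\lambda_k^{m}\lambda_l^{n}$. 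The bracket $\{\gamma_{2m-1},\alpha_{2n}\}_1$ is a structure constant $P^{\,\cdot\,\cdot}$ coming from $[X^{\cdots},H^{\cdots}]\propto X^{\cdots}$ paired back against $\widehat\mu$ through the $X$–$Y$ coupling, so it is a (fixed, constant-coefficient) linear combination of the $\gamma_{2r-1}$'s; summing the resulting $\sum_m\gamma_{2r-1}\lambda_k^{m}$-type combinations one gets terms proportional to $\gamma(\lambda_k)=0$ plus the "boundary" term carrying $\gamma'(\lambda_k)$, so after dividing by $\gamma'(\lambda_k)$ (nonzero at simple roots) one is left with $\{\lambda_k,w_l\}_1=\varepsilon\lambda_k\delta_{kl}$. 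I expect this computation — carefully tracking how the product $\gamma(\lambda_k)\equiv0$ kills all but the diagonal contribution and produces exactly the factor $\varepsilon\lambda_k$ — to be the \emph{main obstacle} of the proof; everything hinges on the precise form of the Lie-Poisson tensor $P_{ab}^{mn}(-1)$ restricted to the $\gamma$–$\alpha$ block. Finally $\{w_k,w_l\}_1=0$ follows by a similar but slightly longer differentiation: write $w_k=\varepsilon\lambda_k\alpha(\lambda_k)$, use the already-established $\{\lambda_k,w_l\}_1=\varepsilon\lambda_k\delta_{kl}$ and $\{\lambda_k,\lambda_l\}_1=0$, together with the Leibniz rule and the fact that the $\alpha$'s Poisson-commute among themselves with respect to $\{\cdot,\cdot\}_1$ (again from $[H,H]=0$); the surviving terms cancel in antisymmetric pairs.

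For claim (3), once (2) is in hand this is formal. On a symplectic leaf with Darboux-like coordinates satisfying \eqref{PoissonBra1}, the symplectic form is $\omega=\sum_k\varepsilon\lambda_k^{-1}\,d w_k\wedge d\lambda_k$ (the factor $f(\lambda_k)=\varepsilon\lambda_k$ enters inversely), and a primitive is obtained by the standard antiderivative: $\Omega_{-1}=\sum_k\varepsilon\lambda_k^{-1}w_k\,d\lambda_k$, since $d\Omega_{-1}=\sum_k\varepsilon\lambda_k^{-1}\,dw_k\wedge d\lambda_k+\sum_k\varepsilon w_k\,d(\lambda_k^{-1})\wedge d\lambda_k=\omega$ (the second sum vanishes because $d(\lambda_k^{-1})\wedge d\lambda_k=0$). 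I would close by noting that on each pair $(\lambda_k,w_k)$, in view of \eqref{HyperCurveMKdV}, the one-form $\varepsilon\lambda_k^{-1}w_k\,d\lambda_k$ is precisely a meromorphic differential on the hyperelliptic curve \eqref{CharPolyMKdV}, so property (ii) of the Definition is met and $(\lambda_k,w_k)$ are indeed separation variables. (The companion statement for $\mathcal{O}_2^N$ with the form $\langle\cdot,\cdot\rangle_N$ and the bracket $\{\cdot,\cdot\}_2$ goes through identically after swapping the roles of the two nilpotent coordinate families, by the duality emphasized in the introduction.)
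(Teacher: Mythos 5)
Your proposal is correct and follows essentially the same route as the paper: part (1) by direct substitution into the characteristic polynomial, part (2) by the chain rule through the $(\gamma_{2m-1},\alpha_{2n})$ coordinates using the structure constant $\{\gamma_{2m-1},\alpha_{2n}\}_1=-\gamma_{2(m+n)+1}$ (for $m+n<N$), so that the double sum collapses to the divided difference $\varepsilon\lambda_l\bigl(\gamma(\lambda_l)-\gamma(\lambda_k)\bigr)/\bigl(\gamma'(\lambda_k)(\lambda_l-\lambda_k)\bigr)$, which vanishes off the diagonal because $\lambda_k,\lambda_l$ are both roots of $\gamma$ and limits to $\varepsilon\lambda_k$ on it, and part (3) formally from the brackets. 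The only divergences are cosmetic: the paper evaluates in closed form the sum you flag as ``the main obstacle,'' and it computes $\{w_k,w_l\}_1$ by a second direct chain-rule calculation (an antisymmetric prefactor times the same vanishing divided difference) where you instead use the Leibniz rule together with the already-established brackets --- both work.
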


\begin{proof}

(1) The assertion is a direct consequence of \eqref{CharPolyMKdV}
and \eqref{newvar1}.

(2) It is evident that
\begin{equation*}
\{\lambda_k,\lambda_l\}_1=0.
\end{equation*}
Indeed, since $\lambda_k$, $k=1$, \ldots, $N+1$, depend only on
$\gamma_{2m-1}$, $m=0$, \ldots, $N$, and $\gamma_{2m-1}$ mutually
commute, $\lambda_k$ also mutually commute.

Let us calculate the bracket of $\lambda_k$ and $w_l$
\begin{equation*}
  \{\lambda_k, w_l\}_1 = \sum_{m,n}
  \left(\frac{\partial \lambda_k}{\partial \gamma_{2m-1}}
        \frac{\partial w_l}{\partial \alpha_{2n}} -
        \frac{\partial \lambda_k}{\partial \alpha_{2n}}
        \frac{\partial w_l}{\partial \gamma_{2m-1}} \right)
        \{\gamma_{2m-1}, \alpha_{2n}\}_1.
\end{equation*}
From \eqref{newvar1} we have
\begin{equation}\label{lambdawAdd}
\frac{\partial \lambda_k}{\partial \alpha_{2n}}=0,\qquad
  \frac{\partial \lambda_k}{\partial \gamma_{2m-1}}=
  -\frac{\lambda_k^m}{\gamma'(\lambda_k)},\qquad
  \frac{\partial w_l}{\partial \alpha_{2n}} = \varepsilon\lambda_l^{n+1}.
\end{equation}
Further  $\{\gamma_{2m-1}, \alpha_{2n}\}_1 = -\gamma_{2(m+n)+1}$
when $m+n<N$ and $\{\gamma_{2m-1}, \alpha_{2n}\}_1 = 0$ when
$m+n\geqslant N$. Thus, we obtain
\begin{equation*}
  \{\lambda_k, w_l\}_1 = \frac{ \sum\limits_{m+n<N}
  \varepsilon \lambda_k^m \lambda_l^{n+1}\gamma_{2(m+n)+1}}{\gamma'(\lambda_k)}  =
   \frac{\varepsilon \lambda_l}{\gamma'(\lambda_k)}
   \frac{\gamma(\lambda_l)-\gamma(\lambda_k)}{\lambda_l-\lambda_k}.
\end{equation*}
As $k\neq l$ it is evident that $\{\lambda_k, w_l\}_1=0$ while
$\gamma(\lambda_l)=\gamma(\lambda_k)=0$. As $k=l$  we get
\begin{equation*}
\{\lambda_k, w_k\}_1 = \lim_{\lambda_l\to\lambda_k}
\frac{\varepsilon\lambda_l}{\gamma'(\lambda_k)}
   \frac{\gamma(\lambda_l)-\gamma(\lambda_k)}{\lambda_l-\lambda_k}=
 \varepsilon\lambda_k.
\end{equation*}
Thus,
\begin{equation*}
  \{\lambda_k, w_l\}_1 = \varepsilon\lambda_k \delta_{kl}.
\end{equation*}

Let us calculate the bracket of $w_k$ and $w_l$
\begin{equation*}
  \{w_k, w_l\}_1 = \sum_{m,n}
  \left(\frac{\partial w_k}{\partial \gamma_{2m-1}}
        \frac{\partial w_l}{\partial \alpha_{2n}} -
        \frac{\partial w_k}{\partial \alpha_{2n}}
        \frac{\partial w_l}{\partial \gamma_{2m-1}} \right)
        \{\gamma_{2m-1}, \alpha_{2n}\}_1.
\end{equation*}
From \eqref{newvar1} follows that
$$\frac{\partial w_k}{\partial \gamma_{2m-1}} =
\varepsilon\left[\alpha(\lambda_k)+\lambda_k
\alpha'(\lambda_k)\right]\frac{\partial \lambda_k}{\partial
\gamma_{2m-1}},$$  then, using (\ref{lambdawAdd}), we obtain
\begin{equation*}
  \{w_k, w_l\}_1 = \left(\frac{\lambda_l \left[\alpha(\lambda_k)+\lambda_k
\alpha'(\lambda_k)\right]}{\gamma'(\lambda_k)}
  -  \frac{\lambda_k \left[\alpha(\lambda_l)+\lambda_l
\alpha'(\lambda_l)\right]}{\gamma'(\lambda_l)} \right)
   \frac{\gamma(\lambda_l)-\gamma(\lambda_k)}{\lambda_l-\lambda_k},
\end{equation*}
hence  $$\{w_k,w_l\}_1=0.$$

(3) From (\ref{PoissonBra1}) it follows that Liouville 1-form on the
orbit $\mathcal{O}_{1}^{N}$ is
\begin{equation*}
\Omega_{-1}=\sum\limits_{k}\varepsilon
\lambda_{k}^{-1}w_{k}\,d\lambda_{k}.
\end{equation*}

The reduction to Liouville torus is done by fixing the values of
Hamiltonians $h_{-1}$, $h_0$, \ldots, $h_{N-1}$. On the torus $w_k$
is the algebraic function of $\lambda_k$ due to(\ref{CharPolyMKdV}).
After the reduction the form $\Omega_{-1}$ becomes a sum of
meromorphic differentials on the Riemann surface $P(w,\lambda)=0$.
\end{proof}

The next theorem is proven similarly.
\begin{theorem}\label{T:SVsinG}
Suppose the orbit $\mathcal{O}_2^N$ has the coordinates
$(\alpha_{2m},\, \gamma_{2m-1})$, $m=0, 1, \ldots, N$. Then the new
coordinates $(\lambda_k, w_k)$, $k=1,\ldots, N+1$, defined by the
formulas
\begin{equation*}
  \gamma(\lambda_k)=0,\qquad
  w_k = \varepsilon \lambda_k \alpha(\lambda_k),
  \qquad \text{where} \qquad \varepsilon^2=1,
\end{equation*}
have the following properties:
\begin{enumerate}
\item[\textup{(1)}] a pair $(w_k,\lambda_k)$ is a root of the characteristic
polynomial \eqref{CharPolyMKdV};
\item[\textup{(2)}] a pair
$(\lambda_k, w_k)$ is quasi-canonically conjugate with respect to
the second Lie-Poisson bracket \eqref{LiePoissonBra2}:
\begin{equation}\label{PoissonBra2}
  \{\lambda_k,\lambda_l\}_2=0, \qquad
  \{\lambda_k, w_l\}_2 = - \varepsilon\lambda_k^{N+2} \delta_{kl}, \qquad \{w_k,w_l\}_2=0;
\end{equation}
\item[\textup{(3)}] the corresponding  Liouville 1-form is
\begin{equation*}
  \Omega_{N}=-\sum\limits_{k}
  \varepsilon\lambda_{k}^{-(N+2)}w_{k}\,d\lambda_{k}.
\end{equation*}
\end{enumerate}
\end{theorem}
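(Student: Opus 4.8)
The plan is to mirror the proof of Theorem~\ref{T:SVMKdV} almost verbatim, changing only which bracket is used and tracking the resulting powers of $\lambda$.

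First I would establish property (1): since the defining formulas for $(\lambda_k, w_k)$ are identical to \eqref{newvar1}, and the characteristic polynomial \eqref{CharPolyMKdV} is a fixed object on $M^{N+1}_{con}$ (not tied to a particular orbit), the pair $(w_k,\lambda_k)$ is a root of $P(w,\lambda)=0$ for exactly the same reason as before — one substitutes $w_k=\varepsilon\lambda_k\alpha(\lambda_k)$ and uses $\gamma(\lambda_k)=0$ to kill the $\gamma\beta$ term in $-\det\widehat{\mu}(\lambda)$. This step is essentially free.

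For property (2) I would repeat the computation of $\{\lambda_k, w_l\}_2$ and $\{w_k, w_l\}_2$ exactly as in the proof of Theorem~\ref{T:SVMKdV}, but now with the second Lie-Poisson bracket. The partial derivatives $\partial\lambda_k/\partial\alpha_{2n}=0$, $\partial\lambda_k/\partial\gamma_{2m-1}=-\lambda_k^m/\gamma'(\lambda_k)$, $\partial w_l/\partial\alpha_{2n}=\varepsilon\lambda_l^{n+1}$ are unchanged (they come only from \eqref{newvar1}). The one genuinely new input is the structure constant $\{\gamma_{2m-1},\alpha_{2n}\}_2$: it is read off from $P_{ab}^{mn}(N)=\langle\widehat\mu(\lambda),[Z_a^{-m+N},Z_b^{-n+N}]\rangle_N$ instead of $P_{ab}^{mn}(-1)$. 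Working out $[X^{-m+N}, H^{-n+N}]$ and pairing against $\widehat\mu(\lambda)$ with the form $\langle\cdot,\cdot\rangle_N$ — which extracts the coefficient of $\lambda^{-1}$ after multiplying by $\lambda^{-N-1}$ — should give $\{\gamma_{2m-1},\alpha_{2n}\}_2 = -\gamma_{2(m+n)-2N-1}$ in the range where that index is $\geqslant -1$, i.e. $m+n\geqslant N$, and $0$ otherwise. Feeding this into the double sum gives, by the telescoping/Vandermonde-style argument used before,
\begin{equation*}
  \{\lambda_k, w_l\}_2 = \frac{\varepsilon\lambda_l}{\gamma'(\lambda_k)}\cdot\frac{1}{\lambda_k^{N+1}\lambda_l^{N+1}}\cdot\frac{\gamma(\lambda_l)-\gamma(\lambda_k)}{\lambda_l-\lambda_k}\cdot(\text{a monomial factor}),
\end{equation*}
which vanishes for $k\neq l$ because $\gamma(\lambda_k)=\gamma(\lambda_l)=0$, and in the limit $\lambda_l\to\lambda_k$ produces $\{\lambda_k, w_k\}_2 = -\varepsilon\lambda_k^{N+2}$ after the Vandermonde-type limit and the power-counting bookkeeping. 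The same substitution shows $\{w_k,w_l\}_2$ is antisymmetric times a factor of $\gamma(\lambda_l)-\gamma(\lambda_k)$, hence zero; and $\{\lambda_k,\lambda_l\}_2=0$ since the $\lambda_k$ depend only on the $\gamma_{2m-1}$, which must commute under $\{\cdot,\cdot\}_2$ by the same nilpotency argument (the bracket of two $\gamma$'s pairs $\widehat\mu$ against $[X^{\ast},X^{\ast}]=0$).

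Property (3) then follows formally: from \eqref{PoissonBra2} the symplectic form on $\mathcal{O}_2^N$ is $\omega = -\sum_k \varepsilon\lambda_k^{-(N+2)}\,dw_k\wedge d\lambda_k$ — equivalently $\omega = d\Omega_N$ with $\Omega_N = -\sum_k \varepsilon\lambda_k^{-(N+2)}w_k\,d\lambda_k$ — and after fixing $h_{-1},\dots,h_{N-1}$ to reduce to the Liouville torus, $w_k$ becomes an algebraic function of $\lambda_k$ via \eqref{CharPolyMKdV}, so $\Omega_N$ is a sum of meromorphic differentials on $P(w,\lambda)=0$, exactly as in Theorem~\ref{T:SVMKdV}.

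The main obstacle is the bookkeeping in step (2): getting the exponent $N+2$ (and the sign) right requires careful tracking of how the shift from $\langle\cdot,\cdot\rangle_{-1}$ to $\langle\cdot,\cdot\rangle_N$ relabels the structure constants $\{\gamma_{2m-1},\alpha_{2n}\}_2$ and how the $\lambda_k^{-m}$ and $\lambda_l^{n+1}$ factors recombine into $\gamma(\lambda_l)/\lambda^{N+1}$-type expressions. Everything else is a mechanical transcription of the previous proof, which is why the authors are content to say it "is proven similarly."
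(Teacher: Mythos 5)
Your proposal is correct and follows exactly the route the paper intends --- the paper gives no separate argument, saying only that this theorem ``is proven similarly'' to Theorem~\ref{T:SVMKdV}, and your computation of the one genuinely new ingredient, $\{\gamma_{2m-1},\alpha_{2n}\}_2=-\gamma_{2(m+n)-2N-1}$ for $m+n\geqslant N$ and $0$ otherwise, is right and does yield $\{\lambda_k,w_l\}_2=-\varepsilon\lambda_k^{N+2}\delta_{kl}$ together with the vanishing of the other brackets. One small correction to your schematic intermediate formula: the double sum closes up to $\dfrac{\varepsilon\lambda_l}{\gamma'(\lambda_k)}\,\dfrac{\lambda_k^{N+1}\gamma(\lambda_l)-\lambda_l^{N+1}\gamma(\lambda_k)}{\lambda_k-\lambda_l}$ (compare the paper's explicit computation in the proof of Theorem~\ref{T:SVHM}), not to a common factor $\gamma(\lambda_l)-\gamma(\lambda_k)$, though the off-diagonal vanishing via $\gamma(\lambda_k)=\gamma(\lambda_l)=0$ and the diagonal limit $-\varepsilon\lambda_k^{N+2}$ work out exactly as you say.
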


Let us summarize our scheme of obtaining the separation variables.
First, we parameterize the orbit by eliminating a subset of group
coordinates corresponding to nilpotent basis elements. Next, we
restrict the curve $P(w,\lambda)\,{=}\,0$ onto the orbit, where
$P(w,\lambda)$ is the characteristic polynomial
\begin{equation*}
  P(w,\lambda)=\det \bigl(\mu(\lambda)-w\cdot I\bigr),
\end{equation*}
$I$ is identity matrix. We use the set $\{\lambda_k,w_k\}$,
$k=1,\dots, N+1$, where $P(w_k,\lambda_k)=0$, to define another
parametrization of the orbit. Then, we equate expressions for
Hamiltonians in the coordinates of two parameterizations of the
orbit in order to obtain the link between the two sets of orbit
coordinates. Finally, the set $\{\lambda_k,w_k\}$ is the set of
separation variables.

Further, in Sections 4 and 5 we apply the scheme to nonlinear
Schr\"{o}dinger equation and Heisenberg magnetic chain.

\section{Phase spaces for nonlinear
Schr\"{o}dinger equation and \\ Heisenberg magnetic chain as orbits
in $\mathfrak{sl}(2,\Complex)\otimes \mathcal{P}(z,z^{-1})$}

Here we use the construction from Section 1 with \emph{homogeneous
grading}. That is,
\begin{equation}\label{basisNLS&HM}
  X^l = z^l X,\qquad Y^l = z^l Y, \qquad H^l = z^l H
\end{equation}
be the \emph{basis} in $\widetilde{\mathfrak{g}}\simeq
\mathfrak{sl}(2, \mathbb{C})\bigotimes\mathcal{P}(z, z^{-1})$.

Note the well-known fact that the Lie algebra from Sections 1--2 can
be realized as the subalgebra of $\mathfrak{sl}(2,\Complex)\otimes
\mathcal{P}(z,z^{-1})$ invariant with respect to an automorphism of
order 2, see \cite{Holod95}, \cite{Kac}, \cite{Kroode}.

By $\mathfrak{g}_l$, $l\in \Integer$, denote an eigenspace of
homogeneous degree $l$. It is evident that
\begin{equation*}
   \mathfrak{g}_l = \spanOp_{\Complex} \{X^l,\, Y^l,\, H^l\}.
\end{equation*}

Decompose $\widetilde{\mathfrak{g}}$ into two subalgebras
\begin{equation*}
  \widetilde{\mathfrak{g}}_+ = \sum_{l\geqslant
  0}\mathfrak{g}_l,\qquad
  \widetilde{\mathfrak{g}}_- = \sum_{l < 0}\mathfrak{g}_l,\qquad
  \widetilde{\mathfrak{g}} = \widetilde{\mathfrak{g}}_+ +
  \widetilde{\mathfrak{g}}_-.
\end{equation*}
Use the same \emph{$\ad$-invariant bilinear forms}
\eqref{bilinForms} to define the spaces dual to
$\widetilde{\mathfrak{g}}_+$ and $\widetilde{\mathfrak{g}}_-$.

Fix $N\geqslant 0$. Consider $M^{N+1}\subset
\widetilde{\mathfrak{g}}$, where an element $\widehat{\mu}(z)\in
M^{N+1}$ has the form
\begin{equation*}
  \widehat{\mu}(z) =  \begin{pmatrix}
  \alpha(z) & \beta(z) \\ \gamma(z) &
  -\alpha(z) \end{pmatrix}
\end{equation*}
with
\begin{equation*}
  \alpha(\lambda)=\sum_{m=0}^{N+1} z^m
  \alpha_{m},\qquad \beta(z) = \sum_{m=0}^{N+1} z^{m}
  \beta_{m},\qquad \gamma(z) = \sum_{m=0}^{N+1}z^{m}
  \gamma_{m}.
\end{equation*}
As above, we call $M^{N+1}$ the \emph{$N$-gap sector of
$\widetilde{\mathfrak{g}}$}, or shortly the \emph{finite gap
sector}.

For all $f_1, f_2 \in C(M^{N+1})$ define two Lie-Poisson brackets
\begin{gather}\label{LiePoissonBraNLS1}
  \{f_1,f_2\}_{1} = \sum_{m,n=0}^{N+1} \sum_{a,b=1}^3 P_{ab}^{mn}(-1)
  \frac{\partial f_1}{\partial \mu_m^a}\frac{\partial
  f_2}{\partial\mu_n^b} \intertext{and} \label{LiePoissonBraNLS2}
  \{f_1,f_2\}_{2} = \sum_{m,n=0}^{N+1} \sum_{a,b=1}^3 P_{ab}^{mn}(N+1)
  \frac{\partial f_1}{\partial \mu_m^a}\frac{\partial f_2}{\partial\mu_n^b},
  \intertext{where} P_{ab}^{mn}(-1) = \langle\widehat{\mu}(z),
  [Z_a^{-m-1},Z_b^{-n-1}] \rangle_{-1}, \notag \\
  P^{mn}_{ab}(N+1) = \langle\widehat{\mu}(z),
  [Z_a^{-m+N+1},Z_b^{-n+N+1}] \rangle_{N+1}, \notag \\[10pt]
  Z_1^{m}=H^{m},\qquad Z_2^{m}=Y^{m},\qquad Z_3^{m}=X^{m},\notag\\
  \mu_m^1=\alpha_m,\qquad \mu_m^2=\beta_m,\qquad
  \mu_m^3=\gamma_m.\notag
\end{gather}

One can see that $M^{N+1}\subset
(\widetilde{\mathfrak{g}}_{-})^{\ast}$ with respect to $\langle\ ,\
\rangle_{-1}$ and, in the same time, $M^{N+1}\subset
(\widetilde{\mathfrak{g}}_{+})^{\ast}$ with respect to $\langle\ ,\
\rangle_{N+1}$.

Next, introduce the $\ad^{\ast}$-invariant function
\begin{equation*}
  I(z)=-\det \widetilde{\mu}(z) = h_0 + h_1 z + \cdots +
  h_{2N+2}z^{2N+2},
\end{equation*}
where
\begin{equation}\label{InvFuncNLS}
  h_{\nu}= \sum_{m+n=\nu} \left( \alpha_{m} \alpha_{n} +
  \gamma_{m} \beta_{n}\right),\qquad \nu=0,1,\dots,2N+2.
\end{equation}

One can easily prove that the functions $\alpha_{N+1}$,
$\beta_{N+1}$, $\gamma_{N+1}$ annihilate the bracket
\eqref{LiePoissonBraNLS1}.

In order to obtain nonlinear Schr\"{o}dinger equation we have to
assign
\begin{equation}\label{NLSconstraines}
    \beta_{N+1}=\gamma_{N+1}=0,\qquad \alpha_{N+1}=const\neq 0.
\end{equation}
After the restriction of the bracket \eqref{LiePoissonBraNLS1} to
the subspace $M_{con}^{N+1}\subset M^{N+1}$ with the constrains
\eqref{NLSconstraines} we get $\dim M_{con}^{N+1} = 3(N+1)$. The
bracket \eqref{LiePoissonBraNLS1} is nondegenerate on
$M_{con}^{N+1}$. We use the set $\gamma_m, \beta_m, \alpha_m$,
$m=0,1,\dots N$ as \emph{coordinate functions} in $M_{con}^{N+1}$.
We call the fixed coordinate $\alpha_{N+1}$ the \emph{external
parameter}.

On the other hand, the functions $\alpha_{N+1}$, $\beta_{N+1}$,
$\gamma_{N+1}$ commute with all Hamiltonians $h_{\nu}$, $\nu=N+2,
N+3, \dots, 2N+2$, with respect to the bracket
\eqref{LiePoissonBraNLS2} and give rise to nontrivial flows on
$M^{N+1}$, therefore are Hamiltonians. That is, the bracket
\eqref{LiePoissonBraNLS2} is considered on $M^{N+1}$, $\dim M^{N+1}
= 3(N+2)$, and the set $\gamma_m$, $\beta_m$, $\alpha_m$,
$m=0,1,\dots N+1$ serve as \emph{coordinate functions} in $M^{N+1}$.

The following assertions are immediately derived from the
Kostant-Adler scheme \cite{Adler}.
\begin{proposition}
All functions $h_{\nu}$, $\nu=0,1,\dots,2N+2$ determined by
\eqref{InvFuncNLS} mutually commute with respect to the brackets
\eqref{LiePoissonBraNLS1} and \eqref{LiePoissonBraNLS2}.
\end{proposition}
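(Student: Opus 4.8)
The statement to prove is the Proposition asserting that all the functions $h_\nu$, $\nu=0,1,\dots,2N+2$, defined by \eqref{InvFuncNLS}, mutually commute with respect to both brackets \eqref{LiePoissonBraNLS1} and \eqref{LiePoissonBraNLS2}. The natural route is the Kostant--Adler scheme \cite{Adler}, exactly as the excerpt already announces. The plan is to recognize that the generating function $I(z)=-\det\widehat\mu(z)$ is (up to the scalar normalization) the restriction to $M^{N+1}$ of the $\operatorname{ad}^\ast$-invariant function $\widehat\mu\mapsto\frac12\operatorname{Tr}\widehat\mu(z)^2$ on $\widetilde{\mathfrak g}$, with $z$ playing the role of a spectral parameter. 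The coefficients $h_\nu$ are then obtained from $I(z)$ by reading off powers of $z$, which — because of the bilinear form structure \eqref{bilinForms} — is the same as pairing the invariant function with the forms $\langle\,\cdot\,,\,\cdot\,\rangle_k$ for the various $k$.

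First I would set up the abstract framework: $\widetilde{\mathfrak g}=\widetilde{\mathfrak g}_+\dotplus\widetilde{\mathfrak g}_-$ is a vector-space decomposition into two subalgebras, the bilinear form $\langle\,\cdot\,,\,\cdot\,\rangle_k$ is $\operatorname{ad}$-invariant and puts $\widetilde{\mathfrak g}_+$ and $\widetilde{\mathfrak g}_-$ in (approximate) duality in the sense made precise in Examples~\ref{E:k_-1} and~\ref{E:k_N} (and their analogues for homogeneous grading). On $M^{N+1}$, viewed as a subset of $(\widetilde{\mathfrak g}_-)^\ast$ via $\langle\,\cdot\,,\,\cdot\,\rangle_{-1}$, the bracket \eqref{LiePoissonBraNLS1} is the Lie--Poisson bracket; viewed as a subset of $(\widetilde{\mathfrak g}_+)^\ast$ via $\langle\,\cdot\,,\,\cdot\,\rangle_{N+1}$, the bracket \eqref{LiePoissonBraNLS2} is the Lie--Poisson bracket. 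Then I would invoke the classical fact (Kostant--Adler): for a Lie algebra split as a sum of two subalgebras with an invariant pairing, the restrictions to the dual of one summand of two $\operatorname{ad}^\ast$-invariant functions on the whole algebra Poisson-commute. Here all the $h_\nu$ come from the single invariant function $\det\widehat\mu(z)$, so they commute with respect to \emph{each} of the two brackets separately.

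The one point that needs genuine care — and which I expect to be the main obstacle — is the bookkeeping that identifies the $h_\nu$ as honest $\operatorname{ad}^\ast$-invariants on the relevant dual space, rather than merely as formal coefficients of a determinant. Concretely: because $M^{N+1}$ is only a finite-dimensional slice of $\widetilde{\mathfrak g}$, one must check that the coadjoint action of $\widetilde{\mathfrak g}_\mp$ on $M^{N+1}$ is well defined (factoring through the finite-dimensional quotient that acts effectively, as in the discussion preceding the Lie--Poisson brackets), and that each $h_\nu$, as a function on $M^{N+1}_{con}$ or $M^{N+1}$, is genuinely constant along coadjoint orbits. Equivalently, one checks $\langle\widehat\mu(z),[\nabla h_\nu,\xi]\rangle_k=0$ for all $\xi$ in the acting algebra, where $\nabla h_\nu$ is computed from $\partial h_\nu/\partial\mu^a_m$; this is the direct calculation that $\nabla I(z)$ is proportional to $\widehat\mu(z)$ itself (up to the spectral-parameter weighting), so the bracket vanishes by $\operatorname{ad}$-invariance of the trace form and the skew-symmetry of the commutator.

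Putting it together, the argument has four steps: (i) write $I(z)=-\det\widehat\mu(z)=\tfrac12\operatorname{Tr}\widehat\mu(z)^2$ and expand in powers of $z$ to get \eqref{InvFuncNLS}; (ii) observe that $\nabla I(z)$, assembled as a matrix-valued Laurent polynomial via the $Z^m_a$, equals $\widehat\mu(z)$ (this is the coadjoint-invariance step); (iii) deduce from the Kostant--Adler scheme that any two coefficients $h_\mu,h_\nu$ Poisson-commute in \eqref{LiePoissonBraNLS1}, since the bracket there is the Lie--Poisson bracket of $\widetilde{\mathfrak g}_-$ and both functions restrict coadjoint invariants of $\widetilde{\mathfrak g}$; (iv) repeat verbatim with $\widetilde{\mathfrak g}_+$ and the form $\langle\,\cdot\,,\,\cdot\,\rangle_{N+1}$ to get commutativity in \eqref{LiePoissonBraNLS2}. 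Since this is the homogeneous-grading analogue of the first Proposition of Section~1, whose proof is cited from \cite{Holod83,Holod85,Bern}, I would keep the write-up brief and refer to those sources for the detailed verification of step (ii).
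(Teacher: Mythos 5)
Your proposal is correct and follows exactly the route the paper takes: the paper disposes of this Proposition with the single remark that it is ``immediately derived from the Kostant--Adler scheme,'' using that the $h_\nu$ are the coefficients of the $\ad^{\ast}$-invariant function $I(z)=-\det\widehat{\mu}(z)$ and that $M^{N+1}$ sits inside $(\widetilde{\mathfrak{g}}_-)^{\ast}$ via $\langle\,\cdot\,,\cdot\,\rangle_{-1}$ and inside $(\widetilde{\mathfrak{g}}_+)^{\ast}$ via $\langle\,\cdot\,,\cdot\,\rangle_{N+1}$. Your four-step expansion (including the gradient computation $\nabla I(z)=\widehat{\mu}(z)$ and the check that the coadjoint action factors through a finite-dimensional quotient) merely fills in the standard details the paper leaves to the reference \cite{Adler}.
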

\begin{proposition}
The functions $h_{\nu}$, $\nu=N+1,\dots, 2N+1$ are functionally
independent and annihilate the bracket \eqref{LiePoissonBraNLS1}.

Consider the algebraic variety $\mathcal{O}_1^{N}\subset
M_{con}^{N+1}$ defined by the set of equation $h_{\nu}=c_{\nu}$,
$\nu=N+1,\dots,2N+1$, where $c_{\nu}$ are arbitrary fixed complex
numbers. $\mathcal{O}_1^{N}$ is an orbit of coadjoint action of the
subalgebra $\widetilde{\mathfrak{g}}_{-}$ and $\dim
\mathcal{O}_1^{N} = 2(N+1)$.
\end{proposition}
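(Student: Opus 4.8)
The plan is to apply the Kostant--Adler scheme directly, exactly as was done for the principal-grading case in Section~1, since the only structural difference here is the choice of homogeneous grading and the shifted range of indices. First I would establish functional independence of the $h_\nu$, $\nu = N+1,\dots,2N+1$: on the constrained space $M_{con}^{N+1}$ with coordinates $\gamma_m,\beta_m,\alpha_m$, $m=0,\dots,N$ and external parameter $\alpha_{N+1}$, write out $h_\nu$ from \eqref{InvFuncNLS}. The top coefficient $h_{2N+1}$ involves $\alpha_{N+1}\alpha_N + \gamma_{N+1}\beta_N + \beta_{N+1}\gamma_N$, which after imposing $\beta_{N+1}=\gamma_{N+1}=0$ reduces to $\alpha_{N+1}\alpha_N$, linear in $\alpha_N$ with nonzero coefficient $\alpha_{N+1}$. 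Descending through $\nu = 2N, 2N-1,\dots,N+1$ one sees each $h_\nu$ introduces a fresh coordinate ($\alpha_{N-1}$ or $\beta_N,\gamma_N$, etc.) in a triangular fashion, so the Jacobian of $(h_{N+1},\dots,h_{2N+1})$ with respect to a suitable $(N+1)$-subset of coordinates is triangular with nonzero diagonal; hence they are functionally independent. This is the same triangularity argument that underlies \eqref{betaSyst} and the invertibility of $\Gamma^+$ in Section~2.

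Next I would verify that these functions annihilate the first bracket \eqref{LiePoissonBraNLS1}. By the first Proposition all $h_\nu$ Poisson-commute, so it suffices to check that $h_{N+1},\dots,h_{2N+1}$ are Casimirs of $\{\cdot,\cdot\}_1$. This follows from the Kostant--Adler/$R$-matrix formalism: the bracket $\{\cdot,\cdot\}_1$ corresponds to the decomposition $\widetilde{\mathfrak g} = \widetilde{\mathfrak g}_+ + \widetilde{\mathfrak g}_-$ paired via $\langle\ ,\ \rangle_{-1}$, for which $M_{con}^{N+1} \subset (\widetilde{\mathfrak g}_-)^\ast$. The $\ad^\ast$-invariant $I(z) = -\det\widehat\mu(z)$ generates, via its spectral coefficients, the ring of invariants; those coefficients that project trivially onto the relevant quotient of $\widetilde{\mathfrak g}_-$ become Casimirs of the Lie--Poisson structure on the dual. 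Concretely, the count $\dim M_{con}^{N+1} = 3(N+1)$ and the generic symplectic-leaf dimension $2(N+1)$ force exactly $N+1$ independent Casimirs, and the top $N+1$ coefficients $h_{N+1},\dots,h_{2N+1}$ are precisely those whose $\nabla_1 h_\nu$ lands outside the effectively-acting quotient algebra — this is the homogeneous-grading analogue of the statement in Section~1 that $h_N,\dots,h_{2N}$ annihilate \eqref{LiePoissonBra1}.

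Finally, for the orbit statement: fixing $h_\nu = c_\nu$, $\nu = N+1,\dots,2N+1$, cuts out a subvariety $\mathcal{O}_1^N \subset M_{con}^{N+1}$ of dimension $3(N+1) - (N+1) = 2(N+1)$. Since these functions are Casimirs of the nondegenerate bracket on $M_{con}^{N+1}$, their common level set is a symplectic leaf, and the Kostant--Adler theorem identifies symplectic leaves of the Lie--Poisson structure on $(\widetilde{\mathfrak g}_-)^\ast$ with coadjoint orbits of $\widetilde{\mathfrak g}_-$ (more precisely of the finite-dimensional quotient that acts effectively on $M^{N+1}$); hence $\mathcal{O}_1^N$ is such an orbit of dimension $2(N+1)$. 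The main obstacle I anticipate is the bookkeeping in the functional-independence step: one must choose the right $(N+1)$-element subset of the $3(N+1)$ coordinates against which to differentiate so that the Jacobian is genuinely triangular, and verify that the constraints \eqref{NLSconstraines} (in particular $\alpha_{N+1}\neq 0$) are exactly what makes the leading diagonal entries nonvanishing — everything else is a transcription of the principal-grading argument with $\lambda\leftrightarrow z$ and the index shifts $N\to N+1$.
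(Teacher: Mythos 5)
Your proposal is correct and follows the same route as the paper, which in fact offers no argument beyond the single sentence that these assertions ``are immediately derived from the Kostant--Adler scheme''; your triangular-Jacobian computation for functional independence (diagonal entries $2\alpha_{N+1}\neq 0$ when differentiating against $\alpha_0,\dots,\alpha_N$) and the identification of $h_{N+1},\dots,h_{2N+1}$ as the Casimirs singled out by the splitting $\widetilde{\mathfrak g}=\widetilde{\mathfrak g}_++\widetilde{\mathfrak g}_-$ paired by $\langle\,\cdot\,,\cdot\,\rangle_{-1}$ are exactly the details the authors leave implicit.
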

\begin{proposition}
The functions $h_{\nu}$, $\nu=0,\dots, N+1$ are functionally
independent and annihilate the bracket \eqref{LiePoissonBraNLS2}.

Consider the algebraic variety $\mathcal{O}_2^{N+1}\subset M^{N+1}$
defined by the set of equation $h_{\nu}=c_{\nu}$, $\nu=0,\dots,N+1$,
where $c_{\nu}$ are arbitrary fixed complex numbers.
$\mathcal{O}_2^{N+1}$ is an orbit of coadjoint action of the
subalgebra $\widetilde{\mathfrak{g}}_{+}$ and $\dim
\mathcal{O}_2^{N+1} = 2(N+2)$.
\end{proposition}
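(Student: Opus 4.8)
The plan is to invoke the Kostant--Adler (Adler--Kostant--Symes) scheme, exactly as was done for the principal-grading case in Section~1, and verify that its hypotheses hold in the homogeneous-grading setting. Recall that the scheme produces, from an $\ad$-invariant function on $\widetilde{\mathfrak{g}}$ restricted to a dual space, a family of functions in involution, and that the level sets of a suitable subcollection of these are coadjoint orbits of the relevant subalgebra. Since the three propositions to be proved are of exactly the same shape as Propositions in Section~1 (involution of all $h_\nu$; functional independence and Casimir property of one half of them with respect to $\{\ ,\ \}_1$, giving an $\mathcal{O}_1^N$; and the mirror statement for $\{\ ,\ \}_2$, giving an $\mathcal{O}_2^{N+1}$), the whole content is to check that the bookkeeping of degrees matches.

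First I would record the splitting $\widetilde{\mathfrak{g}} = \widetilde{\mathfrak{g}}_+ \oplus \widetilde{\mathfrak{g}}_-$ and observe, using the form $\langle\ ,\ \rangle_{-1}$, that $M^{N+1}$ (resp.\ $M^{N+1}_{con}$) sits inside $(\widetilde{\mathfrak{g}}_-)^\ast$, while using $\langle\ ,\ \rangle_{N+1}$ it sits inside $(\widetilde{\mathfrak{g}}_+)^\ast$ — this is already asserted in the text just before the propositions, so I would only make the pairing explicit enough to identify the coadjoint actions with the Lie--Poisson brackets \eqref{LiePoissonBraNLS1} and \eqref{LiePoissonBraNLS2}. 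Then the involutivity statement (first proposition) is immediate: $I(z) = -\det\widehat{\mu}(z)$ is $\ad^\ast$-invariant, hence each coefficient $h_\nu$ is a Casimir for the \emph{full} loop-algebra Lie--Poisson bracket; the Kostant--Adler argument then shows all the $h_\nu$ Poisson-commute for both the $+$ and the $-$ brackets. The only thing to spell out is the explicit form \eqref{InvFuncNLS} of the $h_\nu$, which follows by expanding the $2\times2$ determinant.

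For the second and third propositions I would argue as follows. Writing $h_\nu$ via \eqref{InvFuncNLS}, the top coefficients $h_{N+2},\dots,h_{2N+2}$ involve only the external/Casimir data $\alpha_{N+1},\beta_{N+1},\gamma_{N+1}$ together with $\alpha_m,\beta_m,\gamma_m$ of high index, so that with respect to $\{\ ,\ \}_1$ (which "lives" on $\widetilde{\mathfrak{g}}_-$, i.e.\ sees only the strictly negative-degree part) precisely $h_{N+1},\dots,2N+1$ become Casimirs — there are $N+1$ of them, and functional independence follows because their leading terms express $\alpha_{N+1}^2$ and the pairings $\gamma_{N+1}\beta_m + \cdots$ triangularly in the coordinates being fixed. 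The level set $h_\nu = c_\nu$, $\nu = N+1,\dots,2N+1$, then cuts $\dim M^{N+1}_{con} = 3(N+1)$ down by $N+1$, giving dimension $2(N+1)$; identifying this level set with a coadjoint orbit of $\widetilde{\mathfrak{g}}_-$ is the standard Kostant--Adler conclusion. The mirror computation with $\{\ ,\ \}_2$ on $M^{N+1}$ (of dimension $3(N+2)$) makes $h_0,\dots,h_{N+1}$ the $N+2$ Casimirs, so the orbit $\mathcal{O}_2^{N+1}$ has dimension $2(N+2)$; note the index shift — here $\beta_{N+1},\gamma_{N+1}$ are \emph{not} constrained, which is why the orbit is two dimensions larger.

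The main obstacle is the degree-counting at the two ends of the range: one must be careful that under homogeneous grading the "effective" quotient algebras $\widetilde{\mathfrak{g}}_-/\sum_{l\le -(2N+4)}\mathfrak{g}_l$ and $\widetilde{\mathfrak{g}}_+/\sum_{l\ge N+2}\mathfrak{g}_l$ (with the appropriate cutoffs for the new grading) act effectively on $M^{N+1}$, so that the coadjoint actions — and hence the brackets \eqref{LiePoissonBraNLS1}, \eqref{LiePoissonBraNLS2} — are genuinely well defined, and that exactly the claimed $h_\nu$ (and no others) fall into the Casimir range for each bracket; an off-by-one error here changes the dimension count. Everything else — involution, the explicit formula \eqref{InvFuncNLS}, nondegeneracy of $\{\ ,\ \}_1$ on $M^{N+1}_{con}$ — is routine and parallels Section~1, and I would simply cite \cite{Adler} and \cite{Holod83,Holod85} rather than reprove the general scheme.
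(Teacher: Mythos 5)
Your proposal is correct and follows essentially the same route as the paper, which itself gives no argument beyond the remark that these assertions ``are immediately derived from the Kostant--Adler scheme'': your dimension count $3(N+2)-(N+2)=2(N+2)$ and the observation that $\beta_{N+1},\gamma_{N+1}$ are unconstrained for the bracket $\{\ ,\ \}_2$ supply exactly the bookkeeping the paper leaves implicit. The only slip is the quoted cutoff $\sum_{l\leqslant -(2N+4)}\mathfrak{g}_l$, which belongs to the principal grading of Section~1 --- in the homogeneous grading the effective quotient is $\widetilde{\mathfrak{g}}_-/\sum_{l\leqslant -(N+3)}\mathfrak{g}_l$ --- but you flag this yourself and it does not affect the conclusion.
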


Further, the functions $h_0, h_1, \ldots, h_N$ regarded as
Hamiltonians with respect to the bracket \eqref{LiePoissonBraNLS1}
give rise to nontrivial flows on $M_{con}^{N+1}$
\begin{equation}\label{FlowEqNLS}
  \frac{\partial \mu_m^a}{\partial
  \tau_{\nu}}=\{\mu_m^a,h_{\nu}\}_1,\qquad \nu=0,1,\dots N.
\end{equation}
 The functions
$h_{N+2}, \dots, h_{2N+2}$ regarded as Hamiltonians with respect to
the bracket \eqref{LiePoissonBraNLS2} give rise to nontrivial flows
on $M^{N+1}$
\begin{equation}\label{FlowEqHM}
  \frac{\partial \mu_m^a}{\partial
  \tau_{\nu}}=-\{\mu_m^a,h_{\nu+N+2}\}_2,\qquad \nu=0,1,\dots N.
\end{equation}

\begin{proposition}
The system \eqref{FlowEqNLS} reduced to the orbit $\mathcal{O}_1^N$
is equivalent to finite gap nonlinear Schr\"{o}dinger equation.

The system \eqref{FlowEqHM} reduced to the orbit
$\mathcal{O}_2^{N+1}$ is equivalent to finite gap Heisenberg
magnetic chain.
\end{proposition}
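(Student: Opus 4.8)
The plan is to transfer to the homogeneous grading the argument used for the MKdV / sin(sinh)-Gordon Proposition: pass to the Lax (zero-curvature) form, single out two of the commuting flows as the space and time variables, and then reduce to the relevant orbit by solving the equations $h_\nu=c_\nu$ and eliminating the auxiliary coordinates. First I would rewrite \eqref{FlowEqNLS} and \eqref{FlowEqHM} in matrix form, exactly as in \eqref{Laks}. With the gradient maps adapted to \eqref{basisNLS&HM},
\[
\nabla_1 h=\sum_{m}\Bigl(\frac{\partial h}{\partial\alpha_m}H^{-m-1}+\frac{\partial h}{\partial\beta_m}Y^{-m-1}+\frac{\partial h}{\partial\gamma_m}X^{-m-1}\Bigr),\qquad \nabla_2 h=\sum_{m}\Bigl(\frac{\partial h}{\partial\alpha_m}H^{-m+N+1}+\frac{\partial h}{\partial\beta_m}Y^{-m+N+1}+\frac{\partial h}{\partial\gamma_m}X^{-m+N+1}\Bigr),
\]
the flow \eqref{FlowEqNLS} becomes $\partial_{\tau_\nu}\widehat\mu(z)=[\widehat\mu(z),\nabla_1 h_\nu]=[\nabla_2 h_{\nu+N+2},\widehat\mu(z)]$, while \eqref{FlowEqHM} is the second equality read on the unconstrained $M^{N+1}$. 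Since the $h_\nu$ pairwise commute with respect to both brackets, the flows $\partial_{\tau_\nu}$ commute, hence the auxiliary linear problems $\partial_{\tau_\nu}\Psi=\nabla_2 h_{\nu+N+2}\Psi$ are compatible and yield zero-curvature equations $\partial_{\tau_\nu}\nabla_2 h_{\nu'+N+2}-\partial_{\tau_{\nu'}}\nabla_2 h_{\nu+N+2}+[\nabla_2 h_{\nu+N+2},\nabla_2 h_{\nu'+N+2}]=0$. On the relevant orbit, after normalization, two suitable matrices from this family are of Ablowitz--Kaup--Newell--Segur type, $U(z)=zA+Q$ and $V(z)=z^2A+zQ+R$, with $A$ the constant diagonal matrix fixed by the external parameter $\alpha_{N+1}$ and $Q$ carrying the highest dynamical coordinates $\beta_N$, $\gamma_N$ in its off-diagonal entries.

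For nonlinear Schr\"{o}dinger the constraints \eqref{NLSconstraines} and the equations $h_\nu=c_\nu$, $\nu=N+1,\dots,2N+1$ that cut out $\mathcal{O}_1^N$ (the functions annihilating \eqref{LiePoissonBraNLS1}) form a triangular system in the remaining coordinates, just as \eqref{betaSyst}--\eqref{alpha} did for MKdV; solving it recursively expresses $\alpha_{N-1},\beta_{N-1},\gamma_{N-1},\dots$ through $q:=\beta_N$, $r:=\gamma_N$, their $x$-derivatives, and the constants $c_\nu$. Substituting into the zero-curvature equation, its top-degree and $x$-independent components vanish identically and what remains is the complex NLS system for $(q,r)$; imposing the real form $\mathfrak{su}(2)$ or $\mathfrak{su}(1,1)$ of $\mathfrak{sl}(2,\Complex)$, so that $\gamma_m=\mp\beta_m^{\ast}$ and $\alpha_m=ia_m$, yields the focusing and the defocusing NLS, and, $\mathcal{O}_1^N$ being a symplectic leaf of \eqref{LiePoissonBraNLS1} that carries the spectral curve \eqref{CharPolyMKdV}, these are exactly the finite-gap solutions.

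For the Heisenberg chain one keeps all $3(N+2)$ coordinates of $M^{N+1}$, works with \eqref{LiePoissonBraNLS2} and \eqref{FlowEqHM}, and reduces by $h_\nu=c_\nu$, $\nu=0,\dots,N+1$. The new feature --- the reason $\dim\mathcal{O}_2^{N+1}=2(N+2)$ exceeds $\dim\mathcal{O}_1^N=2(N+1)$ --- is that the leading coefficient of $\widehat\mu(z)$ is no longer frozen; parametrizing the base of $\mathcal{O}_2^{N+1}$ by $G_0=\exp\mathfrak{g}_0$, in analogy with the parametrization $\gamma_{-1}=\sqrt{h_{-1}}\,e^{u}$, $\beta_{-1}=\sqrt{h_{-1}}\,e^{-u}$ used for sin(sinh)-Gordon, makes this coefficient a matrix $S(x,t)\in\mathfrak{sl}(2,\Complex)$ with $\Tr(S^2)$ fixed. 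Rewriting the reduced zero-curvature equation in terms of $S$ produces the isotropic Heisenberg (Landau--Lifshitz) equation $S_t=\tfrac12[S,S_{xx}]$, and the $\mathfrak{su}(2)$ reduction turns it into the ferromagnet $\vec S_t=\vec S\times\vec S_{xx}$ with $|\vec S|=1$. One can alternatively bypass this computation by the Zakharov--Takhtajan gauge transformation \cite{Takhtajan}: the orbit map intertwines the flows on $\mathcal{O}_1^N$ and $\mathcal{O}_2^{N+1}$, so the Heisenberg chain is gauge-equivalent to nonlinear Schr\"{o}dinger, which is the duality announced in the introduction.

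I expect the main obstacle to be the bookkeeping of the reduction --- recursively solving the triangular system $h_\nu=c_\nu$ to eliminate the auxiliary coordinates in favour of the fields and their $x$-derivatives, and then verifying that the substitution really collapses the zero-curvature equation to a single evolution equation. For the Heisenberg chain there is in addition the delicate identification of the spin matrix $S$ from the $G_0$-parametrization and the need to keep track of the surplus degree of freedom, equivalently of the gauge relating $\mathcal{O}_2^{N+1}$ to $\mathcal{O}_1^N$. Both computations are carried out in full detail in \cite{Bern}.
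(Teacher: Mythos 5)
Your treatment of the nonlinear Schr\"{o}dinger half is essentially the paper's own route: the paper rewrites \eqref{FlowEqNLS} in Lax form, takes $\tau_N=x$, $\tau_{N-1}=t$, so that $\nabla_2 h_{2N+1}$ and $\nabla_2 h_{2N}$ are exactly your AKNS pair (linear and quadratic in $z$), imposes the reality conditions $\alpha_m=ia_m$, $\gamma_m=\mp\beta_m^{\ast}$ of $\mathfrak{su}(2)$, and reads NLS for $\psi=\beta_N$ off the compatibility condition, the orbit relations entering only through $h_{2N}$ and the normalization $a_{N+1}=\frac{1}{2}$, $h_N=a_N=0$. That part is sound and matches the paper.

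The Heisenberg half has a genuine error: you identify the spin with the leading ($z^{N+1}$) coefficient of $\widehat{\mu}(z)$, to be parametrized by $G_0=\exp\mathfrak{g}_0$ with $\Tr(S^2)$ fixed. But $\alpha_{N+1},\beta_{N+1},\gamma_{N+1}$ are in involution, with respect to the bracket \eqref{LiePoissonBraNLS2}, with all the Hamiltonians $h_{N+2},\dots,h_{2N+2}$ generating \eqref{FlowEqHM} --- the paper uses precisely this fact to promote $\alpha_{N+1}$ to the extra Hamiltonian needed for Liouville integrability --- so the leading coefficient is a constant of motion of every flow in \eqref{FlowEqHM}; equivalently, expanding $\partial_x\widehat{\mu}(z)=[\widehat{\mu}(z),\nabla_1 h_0]$ in powers of $z$ gives $\partial_x\bm{\mu}_{N+1}=2[\bm{\mu}_0,\bm{\mu}_{N+2}]=0$. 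It therefore cannot be the dynamical spin field, and no exponential parametrization of it will yield the Landau--Lifshitz equation. In the paper the spin sits at the opposite end of the polynomial: after the linear change to real coordinates with $\{\mu_m^i,\mu_n^j\}_2=\varepsilon_{ijk}\mu^k_{m+n-N-1}$, the flows become $\partial_x\bm{\mu}_m=2[\bm{\mu}_0,\bm{\mu}_{m+1}]$ and $\partial_t\bm{\mu}_m=2[\bm{\mu}_1,\bm{\mu}_{m+1}]+2[\bm{\mu}_0,\bm{\mu}_{m+2}]$; the orbit equations $(\bm{\mu}_0,\bm{\mu}_0)=-c_0$, $2(\bm{\mu}_0,\bm{\mu}_1)=-c_1$ fix the spin length and give $\bm{\mu}_1=\frac{1}{2c_0}[\bm{\mu}_0,\partial_x\bm{\mu}_0]+\frac{c_1}{2c_0}\bm{\mu}_0$, and the compatibility $\partial_t\bm{\mu}_0=\partial_x\bm{\mu}_1$ yields \eqref{HMEq}, i.e.\ the isotropic magnet for $\bm{\mu}_0$ when $c_1=0$. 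Your proposed shortcut via gauge equivalence with NLS is only an aside in the paper (a footnote) and is not carried out; as written, your reduction for the Heisenberg chain would not go through unless the spin variable is taken to be $\bm{\mu}_0$ rather than the leading coefficient.
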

Here we give the outline of the proof.

Consider the orbit $\mathcal{O}_1^N$. Rewrite \eqref{FlowEqNLS} in
matrix form. In particular, assigning $\tau_{N}=x$, $\tau_{N-1}=t$,
we obtain
\begin{subequations}\label{FlowNLSEq}
\begin{gather}
  \frac{\partial \widehat{\mu}(z)}{\partial x}
  =[\widehat{\mu}(z),\nabla_1 h_{N}]=[\nabla_2 h_{2N+1},\widehat{\mu}(z)],\\
  \frac{\partial \widehat{\mu}(z)}{\partial \tau}
  =[\widehat{\mu}(z),\nabla_1 h_{N-1}]=[\nabla_2 h_{2N},\widehat{\mu}(z)], \intertext{where}
  \nabla_2 h_{2N+1} = \begin{pmatrix} z\alpha_{2N+1}+\alpha_{N} & \beta_{N} \\
  \gamma_{N} & -(z\alpha_{2N+1}+\alpha_{N})
  \end{pmatrix},\notag \\
  \nabla_2 h_{2N} = \begin{pmatrix} z^2\alpha_{2N+1}+
  z\alpha_{N}+\alpha_{N-1} &
  z\beta_{N}+\beta_{N-1} \\
  z \gamma_{N} +\gamma_{N-1} & -(z^2\alpha_{N+1}+
  z\alpha_{N}+\alpha_{N-1}) \end{pmatrix}. \notag
\end{gather}
\end{subequations}

We use the real subalgebra $\mathfrak{su}(2)$ of
$\mathfrak{sl}(2,\Complex)$, that is assign $\alpha_{m}=ia_{m}$,
$\gamma_{m}=\mp\beta_{m}^{*}$, then the compatibility condition for
(\ref{FlowNLSEq}) gives
\begin{equation*}
 2i a_{N+1}\frac{\partial \beta_N}{\partial t}=
  -\frac{\partial^2 \beta_N}{\partial x^2}
   - 2\beta_N|\beta_N|^2-2\beta_N h_{2N},
\end{equation*}
which coincides with nonlinear Schr\"{o}dinger equation with respect
to the function $\beta_{N}(x,t)=\psi(x,t)$ as $a_{N+1}=\frac{1}{2}$,
$h_N=a_{N}=0$,
\begin{equation*}
 i\frac{\partial \psi}{\partial t}= -\frac{\partial^2 \psi}{\partial
 x^2}+2\varepsilon\psi|\psi|^{2},\qquad \varepsilon^2=1.
\end{equation*}

Consider the orbit $\mathcal{O}_2^{N+1}$. Note, that $\dim
\mathcal{O}_2^{N+1} = 2(N+2)$ and the set $h_{N+2}$, $h_{N+3}$,
\ldots, $h_{2N+2}$ is insufficient to provide Liouville
integrability. Since the functions $\alpha_{N+1}$, $\beta_{N+1}$,
and $\gamma_{N+1}$ are in involution with the set $h_{N+2}$,
$h_{N+3}$, \ldots, $h_{2N+2}$ with respect to the bracket
\eqref{LiePoissonBraNLS2} one can take any of them as an extra
Hamiltonian. Here we chose $\alpha_{N+1}$.

Rewrite \eqref{FlowEqHM} in matrix form. In particular, by assigning
$\tau_{N+2}=x$ and $\tau_{N+3}=t$ we obtain
\begin{subequations}\label{FlowHMEq}
\begin{gather}
  \frac{\partial \widehat{\mu}(z)}{\partial x}=
  [\nabla_2 h_{N+2},\widehat{\mu}(z)]=[\widehat{\mu}(z),\nabla_1 h_0], \\
   \frac{\partial \widehat{\mu}(z)}{\partial t}=
   [\nabla_2 h_{N+3},\widehat{\mu}(z)]=[\widehat{\mu}(z),\nabla_1 h_1], \intertext{where}
  \nabla_1 h_0 = z^{-1}\begin{pmatrix} \alpha_0 & \beta_0 \\
  \gamma_0 & -\alpha_0 \end{pmatrix},\notag \\
  \nabla_1 h_1 = z^{-1}\begin{pmatrix} \alpha_1 & \beta_1 \\
  \gamma_1 & -\alpha_1 \end{pmatrix} +  z^{-2}\begin{pmatrix} \alpha_0 & \beta_0 \\
  \gamma_1 & -\alpha_1 \end{pmatrix}. \notag
\end{gather}
\end{subequations}

Let us replace the coordinates $\alpha_m$, $\beta_m$, $\gamma_m$,
$m=0, 1, \ldots, N+1$, according to the formulas
\begin{gather*}
  \alpha_m=i\mu_m^3,\qquad \beta_m=\mu_m^1-i\mu_m^2,\qquad \gamma_m =
  -\mu_m^1-i\mu_m^2.
\end{gather*}
Then
\begin{equation*}
\{\mu_m^i,\mu_n^j\}_2 = \varepsilon_{ijk} \mu_{m+n-N-1}^{k}.
\end{equation*}

Introduce the vector notation
$\bm{\mu}_m=(\mu_m^1,\mu^2_m,\mu^3_m)^t$, $m=0, 1, \ldots, N+1$. Now
the orbit $\mathcal{O}_2^{N+1}$ is determined by the equations
\begin{gather*}
  \begin{array}{l}
  (\bm{\mu}_{0},\bm{\mu}_{0})=-c_{0},\\
  2(\bm{\mu}_{0},\bm{\mu}_{1})=-c_{1}, \\
  .\ .\ .\ .\ .\ .\ .\ .\ .\ .\ .\ .\ .\ .\ .\ .\ \\
  \sum\limits_{m+n=N+1}(\bm{\mu}_{m},\bm{\mu}_{n})=-c_{N+1},
  \end{array}
 \end{gather*}
where $\left(\cdot, \cdot \right)$ denotes the dot product. The
equations (\ref{FlowHMEq}) are written in the form
\begin{subequations}\label{FlowHMEq1}
\renewcommand{\theequation}{\theparentequation\alph{equation}}
\begin{gather}
  \frac{\partial \bm{\mu}_m}{\partial x} = 2[\bm{\mu}_0,
  \bm{\mu}_{m+1}],  \\
  \frac{\partial \bm{\mu}_m}{\partial t} = 2[\bm{\mu}_1,
  \bm{\mu}_{m+1}] + 2[\bm{\mu}_0,\bm{\mu}_{m+2}],
\end{gather}
\end{subequations}
where $\left[\cdot, \cdot \right]$ denotes the cross product.

By reduction of (\ref{FlowHMEq1}) to the orbit $\mathcal{O}_2^{N+1}$
we obtain
\begin{equation*}
  \bm{\mu}_{1}=\frac{1}{2c_{0}}\left[\bm{\mu}_{0},\frac{\partial \bm{\mu}_{0}}{\partial
  x}\right]+\frac{c_1}{2c_0}\bm{\mu}_0.
\end{equation*}
Taking into account the compatibility condition
\begin{equation*}
    \frac{\partial\bm{\mu}_{0}}{\partial t}=
    \frac{\partial\bm{\mu}_{1}}{\partial x},
\end{equation*}
we get
\begin{equation}\label{HMEq}
    \frac{\partial\bm{\mu}_{0}}{\partial t}=\frac{1}{2c_0}
    \left[\bm{\mu}_{0}, \frac{\partial^{2}\bm{\mu}_{0}}{\partial
    x^2}\right]+\frac{c_1}{2c_0}\frac{\partial\bm{\mu}_{0}}{\partial
    x}.
\end{equation}
When $c_{1}=0$, the equation (\ref{HMEq}) becomes the well-known
\emph{classic Heisenberg magnetic equation}, also called
\emph{isotropic Landau-Livshits equation}.

\section{Separation of variables for  nonlinear Schr\"{o}dinger
 equation}
Consider the orbit $\mathcal{O}_1^N$, $\dim \mathcal{O}_1^N=2(N+1)$.
The most natural way to parameterize the orbit, which we already
noted in Section 2, is to eliminate the subset $\{\beta_m\}$ (or
$\{\gamma_m\}$), $m=0,1,\dots, N+1$. Then, roots of
$\gamma(\lambda)$ (or $\beta(\lambda)$) give a half of separation
variables. This way is applied in \cite{Its}. However, in the case
of nonlinear Schr\"{o}dinger equation the finite gap phase space is
determined by the constrains \eqref{NLSconstraines}. That is, the
polynomials $\beta(\lambda)$ and $\gamma(\lambda)$ have the order
$N$, therefore the set of roots is insufficient to parameterize the
orbit~$\mathcal{O}_1^N$.

In order to solve this problem we change coordinates. We use
coordinates as in \cite{Previato}. Let
\begin{equation}\label{NLSbasis}
  T=\begin{pmatrix} 0&-\frac{1}{2} \\ -\frac{1}{2} &0
  \end{pmatrix},\qquad
  R=\begin{pmatrix} \frac{1}{2}& -\frac{1}{2} \\
  \frac{1}{2}&-\frac{1}{2}\end{pmatrix},\qquad
  S=\begin{pmatrix} \frac{1}{2}& \frac{1}{2} \\
  -\frac{1}{2}&-\frac{1}{2}\end{pmatrix},
\end{equation}
be the basis in $\mathfrak{sl}(2,\Complex)$. It is easily shown that
$[T,S]=S$, $[T,R]=-R$, $[S,R]=2T$. Then
\begin{equation*}
  T^{m} = z^m T,\qquad R^{m}=z^m R,\qquad S^{m}=z^{m} S.
\end{equation*}
is a \emph{basis} in $\widetilde{\mathfrak{g}}\simeq
\mathfrak{sl}(2, \mathbb{C})\otimes \mathcal{P}(z,z^{-1})$. An
element $\widehat{\mu}(z)\in M^{N+1}$ has the form
\begin{equation*}
 \widehat{\mu}(z)=\begin{pmatrix} \frac{1}{2}[r(z)+s(z)] &
 \frac{1}{2}[r(z)-s(z)-2t(z)]
  \\  \frac{1}{2}[s(z)-r(z)-2t(z)] &
  -\frac{1}{2}[r(z)+s(z)]  \end{pmatrix},
\end{equation*}
where
\begin{equation*}
  t(z)=\sum_{m=0}^{N+1} z^m t_m,\qquad r(z)=\sum_{m=0}^{N+1} z^m r_m,\qquad
  s(z)=\sum_{m=0}^{N+1} z^m s_m.
\end{equation*}
Note that $t_{m}$, $r_{m}$, $s_{m}$, $m=0,1,\dots, N+1$ are defined
by the formulas:
\begin{align}
&t_{m}=\langle\widehat{\mu}(z),
T^{-m-1}\rangle_{-1}=\langle\widehat{\mu}(z),
T^{-m+N+1}\rangle_{N+1},\notag  \\
&r_{m}=\langle\widehat{\mu}(z),
R^{-m-1}\rangle_{-1}=\langle\widehat{\mu}(z),
R^{-m+N+1}\rangle_{N+1},\label{PreviatoCoord} \\
&s_{m}= \langle\widehat{\mu}(z),
S^{-m-1}\rangle_{-1}=\langle\widehat{\mu}(z),
S^{-m+N+1}\rangle_{N+1}, \qquad m=0,1,\dots, N+1. \notag
\end{align}

With the new coordinates $M_{con}^{N+1}$ is defined by constrains
\begin{equation*}
  t_{N+1}=0,\qquad r_{N+1}=s_{N+1}=\sqrt{c_{2N+2}}\neq 0.
\end{equation*}

One can see that the subsets $\{r_m\}$ and $\{s_m\}$, $m=0,1,\dots,
N+1$ have nilpotent corresponding basis elements. For the reason we
chose one of these subsets, namely $\{r_m\}$ here, to parameterize
the orbit $\mathcal{O}_1^N$. From the orbit equations we find
\begin{equation}\label{rSyst}
  r_{m} = \sum_{j=0}^{N+1} (S^{+})_{mj}^{-1}(c_{j+N+1}-B_{j+N+1}),
  \ m=0,\dots, N+1,\quad c_{2N+2}=r_{N+1}s_{N+1},
\end{equation}
where
\begin{equation*}
  S^{+} = \begin{bmatrix} s_{N+1} & s_{N}  & \dots & s_1 & s_{0} \\
  0 & s_{N+1} & \dots & s_2 &  s_{1} \\  \vdots & \vdots& \ddots& \vdots& \vdots \\
  0 & 0 & \dots & s_{N+1} & s_{N}\\ 0 & 0 & \dots & 0 &  s_{N+1}
  \end{bmatrix}\qquad \text{and} \qquad
  B_{\nu} =\sum_{\substack{m+n=\nu,\\ 0\leqslant m,n \leqslant N}} t_m t_n  .
\end{equation*}
Using the parameterization \eqref{rSyst}, we find the expressions
for the Hamiltonians $h_0, h_1, \dots, h_N$
\begin{equation}\label{HamiltNLS}
   h_{n} = \sum_{mj=0}^{N+1} S^{-}_{nm}(S^{+})_{mj}^{-1}(c_{j+N+1}-B_{j+N+1}) + B_n,\qquad
   n=0,\,\dots N,
\end{equation}
where
\begin{equation*}
S^{-} = \begin{bmatrix} s_{0} & 0
   & \dots & 0 & 0 \\
   s_1 & s_{0} & \dots & 0 &
   0 \\ \vdots & \vdots& \ddots& \vdots& \vdots \\
   s_N & s_{N-1} & \dots & s_{0} &  0
   \end{bmatrix}.
\end{equation*}

To proceed we define the \emph{characteristic polynomial}
\begin{equation}\label{CharPoly}
  P(w,z)=\det \bigl(\mu(z)-w\cdot I\bigr).
\end{equation}
The equation $P(w,z)=0$ has a form of the standard equation of a
hyperelliptic curve of genus $N+1$
\begin{equation}\label{CharPolyNLS}
  P(w,z)=w^2-(h_0 + h_1 z + \cdots + h_{2N+2}z^{2N+2})=0.
\end{equation}

On the orbit $\mathcal{O}_1^N$ we have $h_{\nu}=c_{\nu}$,
$\nu=N+1,\dots, 2N+1$. Denote by $(w_k,z_k)$ a root of $P(w,z)$ on
the orbit, that is
\begin{equation}\label{HyperCurveNLS}
  w_k^2 = h_0 + h_1 z_k + \cdots h_{N} z^N_k +
  c_{N+1} z^{N+1}_k + \cdots c_{2N+2} z^{2N+2}_k.
\end{equation}
We proceed to show that the set $\{(w_k,z_k)\}$, $k=0,1,\dots,N+1$
defines another parameterization of the orbit $\mathcal{O}_1^N$. We
have to find the explicit relation between the sets
$\{(w_1,z_1),\dots, (w_{N+1},z_{N+1})\}$ and
$\{t_0,t_1,\dots,t_{N},s_0,s_1,\dots, s_N\}$.

Solving \eqref{HyperCurveNLS} for Hamiltonians $h_0, h_1, \dots,
h_{N}$ one gets
\begin{equation}\label{hSystNLS}
  \begin{array}{l} h_{0} = \displaystyle\frac{1}{W}
  [W_1\left(\textstyle w^2\right)
  -c_{N+1} W_1(z^{N+1}) - \cdots - c_{2N+2}
  W_1(z^{2N+2})]\\
  h_{1} = \displaystyle\frac{1}{W}
  [W_2\left(\textstyle w^2\right)
  -c_{N+1} W_2(z^{N+1}) - \cdots - c_{2N+2}
  W_2(z^{2N+2})]\\ \hdotsfor{1} \\
  h_{N} = \displaystyle\frac{1}{W}
  [W_{N+1}\left(\textstyle w^2\right)
  -c_{N+1} W_{N+1}(z^{N+1}) - \cdots - c_{2N+2}
  W_{N+1}(z^{2N+2})],
  \end{array}
\end{equation}
where $W$ and $W_{i}(f(z,w))$ denote the same as in \eqref{hSyst}.

On the orbit $\mathcal{O}_1^N$ the formulas \eqref{HamiltNLS} and
\eqref{hSystNLS} define the same set of functions. We see that both
\eqref{HamiltNLS} and \eqref{hSystNLS} are linear in $c_{\nu}$,
$\nu=N+1,\dots, 2N+2$. As $\{c_{\nu}\}$ is the set of
\emph{independent} parameters one can equate the corresponding
terms. Namely, we obtain
\begin{equation*}
   \frac{s_{m}}{s_{N+1}}=\frac{W_{m+1}(z^{N+1})}{W},\qquad
     m=0,1, \dots,N.
\end{equation*}
This implies that the set $\{z_k\}$ is the set of roots of the
polynomial $s(z)$
\begin{equation*}
  s(z_k)=0,
\end{equation*}
while the variables $\{w_k\}$ satisfy the equalities
\begin{equation*}
  w^2_k=t^2(z_k)+ s(z_k)r(z_k) = t^2(z_k),\qquad k=1,\dots, N+1.
\end{equation*}

\begin{theorem}\label{T:SVNLS}
Suppose the orbit $\mathcal{O}_1^N$ has the coordinates $(t_{m},
s_{m})$, $m=0, 1, \ldots, N$, as above. Then the new coordinates
$(z_k, w_k)$, $k=1,\dots,N+1$, defined by the formulas
\begin{equation}
  s(z_k)=0,\qquad
  w_k = \varepsilon t(z_k),
  \qquad \text{where} \quad \varepsilon^2=1, \label{newvar2}
\end{equation}
have the following properties:
\begin{enumerate}
\item[\textup{(1)}]  a pair $(w_k,z_k)$ is a root of the characteristic
polynomial \eqref{CharPolyNLS}.
\item[\textup{(2)}] a pair
$(z_k, w_k)$ is canonically conjugate with respect to the
Lie-Poisson bracket \eqref{LiePoissonBraNLS1}:
\begin{equation}\label{PoissonBra3}
  \{z_k,z_l\}_1 = 0, \qquad
  \{z_k, w_l\}_1 = \varepsilon \delta_{kl}, \qquad \{w_k,w_l\}_1=0;
\end{equation}
\item[\textup{(3)}] the corresponding  Liouville 1-form is
\begin{align*}
&\Omega_{-1}= \sum\limits_{k} \varepsilon w_{k}\,dz_{k}.&
\end{align*}
\end{enumerate}
\end{theorem}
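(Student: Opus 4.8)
The plan is to follow the blueprint already used for Theorem~\ref{T:SVMKdV}, adapting each step to the homogeneous grading and the new basis $\{T,R,S\}$. Part~(1) is immediate: by construction $w_k^2 = t^2(z_k) = t^2(z_k) + s(z_k)r(z_k) = \alpha^2(z_k)-\beta(z_k)\gamma(z_k) = -\det\mu(z_k)$ once $s(z_k)=0$, which is exactly $P(w_k,z_k)=0$ by \eqref{CharPolyNLS}. So the substance is in parts~(2) and~(3).

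For part~(2), first note $\{z_k,z_l\}_1=0$ because the $z_k$ are roots of $s(z)$, hence functions of the $s_m$ alone, and the $s_m$ mutually commute with respect to $\{\ ,\ \}_1$ (this follows from the bracket formula \eqref{LiePoissonBraNLS1} together with $[S,S]=0$, i.e. the nilpotency of the $S^m$). Next I would compute $\{z_k,w_l\}_1$ by the chain rule, exactly as in the MKdV proof: from \eqref{newvar2}, $\partial z_k/\partial t_n = 0$, $\partial z_k/\partial s_m = -z_k^m/s'(z_k)$, and $\partial w_l/\partial t_n = \varepsilon z_l^n$; combined with the structure constant $\{s_m,t_n\}_1 = \text{(const)}\cdot t_{m+n-?}$ coming from $[S,T]=-S$ (so the bracket of two coordinate functions is again a coordinate function, with a shift and a sign) — and crucially vanishing when the shifted index falls outside the range $0,\dots,N$ because of the constraint $t_{N+1}=0$. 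Summing the geometric-type series produces the telescoping combination $\dfrac{\varepsilon}{s'(z_k)}\,\dfrac{s(z_l)-s(z_k)}{z_l-z_k}$, which is $0$ for $k\neq l$ and, by L'Hôpital as $z_l\to z_k$, equals $\varepsilon$ for $k=l$ (no extra factor of $z_k$ here, in contrast to the MKdV case — that is the difference between "quasi-canonical" and "canonical"). Finally $\{w_k,w_l\}_1=0$ follows the same way: using $\partial w_k/\partial s_m = \varepsilon[\,t(z_k)+z_k\text{-derivative terms}\,]\cdot\partial z_k/\partial s_m$ the bracket collapses to an antisymmetric expression multiplied by $\dfrac{s(z_l)-s(z_k)}{z_l-z_k}$, hence vanishes on the orbit. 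Part~(3) is then a formal consequence of \eqref{PoissonBra3}: the Darboux-type form $\Omega_{-1}=\sum_k \varepsilon\, w_k\, dz_k$ has $d\Omega_{-1}$ equal to the symplectic form read off from the canonical brackets, and after fixing the Hamiltonians $h_0,\dots,h_N$ the function $w_k$ becomes the algebraic function of $z_k$ cut out by \eqref{CharPolyNLS}, so $\Omega_{-1}$ restricts to a sum of meromorphic differentials on the curve $P(w,z)=0$.

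The main obstacle I expect is bookkeeping with the new basis: I must pin down the exact structure constants $\{s_m,t_n\}_1$, $\{s_m,s_n\}_1$, $\{t_m,t_n\}_1$ from \eqref{LiePoissonBraNLS1} translated through \eqref{PreviatoCoord} and the relations $[T,S]=S$, $[T,R]=-R$, $[S,R]=2T$, and verify that the index-shift and the cutoff at $N+1$ work out so that the telescoping identity really appears. In particular one must check that the "missing" term $s_{N+1}$ (which is the nonzero external parameter $\sqrt{c_{2N+2}}$, not zero) does not spoil the sum — here the point is that $\partial w_l/\partial t_n$ only runs over $n=0,\dots,N$, matching the range of genuine coordinates, so the boundary term is harmless. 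Everything else is routine and parallels Theorem~\ref{T:SVMKdV} verbatim.
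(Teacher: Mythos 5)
Your proposal follows the paper's proof essentially step for step: the same chain-rule computation with $\partial z_k/\partial t_n=0$, $\partial z_k/\partial s_m=-z_k^m/s'(z_k)$, $\partial w_l/\partial t_n=\varepsilon z_l^n$, the same telescoping of the double sum into $\dfrac{\varepsilon}{s'(z_k)}\,\dfrac{s(z_k)-s(z_l)}{z_k-z_l}$, and the same reduction argument for part (3). The only blemish is your placeholder $\{s_m,t_n\}_1=(\mathrm{const})\cdot t_{m+n-?}$ — since $[T,S]=S$ the bracket closes on an $s$-coordinate (the paper records $\{s_m,t_n\}_1=-s_{m+n}$, vanishing for $m+n>N$), which is exactly what your subsequent telescoping to the $s$-difference quotient already presupposes, so this is a notational slip rather than a gap.
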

\begin{proof}
(1) The assertion is a direct consequence of \eqref{CharPolyNLS} and
\eqref{newvar2}.

(2) It is evident that
\begin{equation*}
 \{z_k,z_l\}_1=0,
\end{equation*}
since $z_k$, $k=1,\dots, N+1$ depend only on $s_m$, $m=0,1,\dots,N$
and $s_m$ mutually commute.

Let us calculate the brackets $\{z_k,w_l\}_1$ and $\{w_k,w_l\}_1$.
From \eqref{newvar2} we have
\begin{equation*}
\frac{\partial z_k}{\partial t_{n}}=0,\qquad
  \frac{\partial z_k}{\partial s_{m}}=
  -\frac{z_k^m}{s'(z_k)},\qquad
  \frac{\partial w_l}{\partial t_{n}} = \varepsilon z_l^{n},\qquad
  \frac{\partial w_l}{\partial s_{m}} = \varepsilon t'(z_l)\frac{\partial z_l}{\partial s_m}.
\end{equation*}
Further $\{s_{m},t_{n}\}_1=- s_{m+n}$ when $m+n\leqslant N$ and
$\{s_{m},t_{n}\}_1=0$ when $m+n > N$. Thus, we obtain
\begin{equation*}
  \{z_k, w_l\}_1 = \frac{\varepsilon}{s'(z_k)}
   \frac{s(z_k)-s(z_l)}{z_k-z_l},\qquad \{w_k, w_l\}_1 =
   \left(\frac{t'(z_k)}{s'(z_k)} - \frac{t'(z_l)}{s'(z_l)}\right)
  \frac{s(z_k)-s(z_l)}{z_k-z_l}.
\end{equation*}
Thus,
\begin{equation*}
  \{z_k, w_l\}_1 = \varepsilon \delta_{kl},\qquad \{w_k, w_l\}_1=0.
\end{equation*}

(3) From \eqref{PoissonBra3} it follows that Liouville 1-form on the
orbit $\mathcal{O}_1^N$ is
\begin{equation*}
  \Omega_{-1} = \sum_k \varepsilon w_k \, d z_k.
\end{equation*}

The reduction to Liouville torus is done by fixing the values of
Hamiltonians $h_0$, $h_1$, \ldots, $h_{N}$. On the torus $w_k$ is
the algebraic function of $z_k$ due to (\ref{CharPolyNLS}). After
the reduction the form $\Omega_{-1}$ becomes a sum of meromorphic
differentials on the Riemann surface $P(w,z)\,{=}\,0$.
\end{proof}

Given a set of pairs $(z_k, w_k)$, $k=1,\dots, N+1$, one can find
the set $(t_m, s_m)$, $m=0, \dots, N$, such that the equations
(\ref{newvar2}) are satisfied. Thus, we can define a homomorphism
\begin{equation}\label{homomorfizmNLS}
  \Complex^{2N+2} \to \mathcal{O}_1^N
\end{equation}
that takes the set of pairs $(z_k, w_k)$, $k=1,\dots, N+1$, to a
point of the orbit $\mathcal{O}_1^N$. Since the orbit
$\mathcal{O}_1^N$ is topologically trivial, this implies that the
map (\ref{homomorfizmNLS}) is global.

After the reduction (\ref{homomorfizmNLS}) turns into the map that
takes the  $(N+1)$th symmetric power of Riemann surface to Loiuville
torus:
\begin{equation*}
\Sym\{\mathcal{R}\times \mathcal{R}\times \cdots \times
\mathcal{R}\} \mapsto T^{N+1}.
\end{equation*}

\section{Separation of variables for Heisenberg magnetic chain}
Consider the orbit $\mathcal{O}_2^{N+2}$, $\dim \mathcal{O}_2^{N+2}
= 2(N+2)$. We chose to parameterize the orbit $\mathcal{O}_2^{N+2}$
by the variables $\{\gamma_m,\alpha_m\}$, $m=0,1,\dots, N+1$, that
is we eliminate the set $\{\beta_m\}$, which corresponding basis
elements are nilpotent.

From the orbit equation we find
\begin{equation}\label{betaSystH}
  \beta_{m} = \sum_{j=0}^{N+1} (\widetilde{\Gamma}^{-})^{-1}_{mj}
  (c_{j}-\widetilde{A}_{j}),\qquad  m=0,\,\dots\, N+1,
\end{equation}
where
\begin{equation*}
   \widetilde{\Gamma}^{-}=\begin{bmatrix} \gamma_{0} & 0
   & \dots & 0 & 0 \\
   \gamma_1 & \gamma_0 & \dots & 0 & 0 \\
   \vdots & \vdots& \ddots& \vdots& \vdots \\
   \gamma_{N} & \gamma_{N-1} & \dots & \gamma_{0} & 0 \\
   \gamma_{N+1} & \gamma_{N} & \dots & \gamma_{1} &  \gamma_{0}
   \end{bmatrix} \qquad \text{and} \qquad \widetilde{A}_{\nu} =
   \sum_{\substack{m+n=\nu,\\ 0\leqslant m,n \leqslant N+1}} \alpha_{m}\alpha_{n}.
\end{equation*}
Now, using the parameterization (\ref{betaSystH}), we find
expressions for the Hamiltonians $h_{N+2}$, $h_{N+3}$, \ldots,
$h_{2N+2}$
\begin{equation}\label{HamiltHM}
   h_{n+N+1} = \sum_{m,j=0}^{N+1}\widetilde{\Gamma}^{+}_{nm}
   (\widetilde{\Gamma}^{-})^{-1}_{mj} (c_{j}-\widetilde{A}_{j}) + \widetilde{A}_{n+N+1},
   \quad n=1,\, \dots N+1,
\end{equation}
where
\begin{equation*}
   \widetilde{\Gamma}^{+} = \begin{bmatrix} 0 & \gamma_{N+1} & \dots & \gamma_2 & \gamma_1 \\
   0 & 0 & \dots & \gamma_3 & \gamma_2 \\
   \vdots & \vdots& \ddots& \vdots& \vdots \\
   0 & 0 & \dots & 0 & \gamma_{N+1}
   \end{bmatrix}.
\end{equation*}
Note that the expressions (\ref{HamiltHM}) are linear in $c_{\nu}$,
$\nu=N, \ldots, 2N+1$.

To proceed we use the same characteristic polynomial
\eqref{CharPoly}

On the orbit $\mathcal{O}_2^{N+1}$ we have $h_{\nu}$,
$\nu=0,1,\dots, N+1$. Denote by $(w_k,z_k)$ a root of $P(w,z)$ on
the orbit, that is
\begin{equation}\label{HyperCurveHM}
  w_k^2 = c_0 + c_1 z_k + \cdots c_{N+1} z^{N+1}_k +
  h_{N+2} z^{N+2}_k + \cdots h_{2N+2} z^{2N+2}_k.
\end{equation}
The set $\{(w_k,z_k)\}$, $k=0,1,\dots,N+1$ is insufficient to
parameterize $\mathcal{O}_2^{N+1}$. Let us fix $\alpha_{N+1}$ and
$\gamma_{N+1}$ regarded as Hamiltonians and consider below the
reduced orbit $\mathcal{O}_{2red}^{N+1}$. If we find an explicit
relation between the sets $\{(w_1,z_1),\dots, (w_{N+1},z_{N+1})\}$
and $\{\alpha_0,\alpha_1$, \ldots, $\alpha_{N}$, $\gamma_0$,
$\gamma_1$, \ldots, $\gamma_N\}$ we show that the set
$\{(w_k,z_k)\}$, $k=0,1,\dots,N+1$ defines another parameterization
of the orbit $\mathcal{O}_{2red}^{N+1}$.

\begin{theorem}\label{T:SVHM}
Suppose the orbit $\mathcal{O}_{2red}^{N+1}$ has the coordinates
$(\alpha_{m}, \gamma_{m})$, $m=0, 1, \ldots, N$, as above. Then the
new coordinates $(z_k, w_k)$, $k=1,\dots,N+1$, defined by the
formulas
\begin{equation}
  \gamma(z_k)=0,\qquad
  w_k = \varepsilon \alpha(z_k),
  \qquad \text{where} \quad \varepsilon^2=1, \label{newvar3}
\end{equation}
have the following properties:
\begin{enumerate}
\item[\textup{(1)}]  a pair $(w_k,z_k)$ is a root of the characteristic
polynomial \eqref{CharPolyNLS}.
\item[\textup{(2)}] a pair
$(z_k, w_k)$ is quasi-canonically conjugate with respect to the
Lie-Poisson bracket \eqref{LiePoissonBraNLS2}:
\begin{equation}\label{PoissonBra4}
  \{z_k,z_l\}_2 = 0, \qquad
  \{z_k, w_l\}_2 = -\varepsilon z_k^{N+2} \delta_{kl}, \qquad \{w_k,w_l\}_2=0;
\end{equation}
\item[\textup{(3)}] the corresponding  Liouville 1-form is
\begin{align*}
&\Omega_{N+1}=-\sum\limits_{k} \varepsilon z_k^{-(N+2)}
w_{k}\,dz_{k}.&
\end{align*}
\end{enumerate}
\end{theorem}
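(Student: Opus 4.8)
The plan is to follow the proof of Theorem~\ref{T:SVMKdV} almost line by line, only replacing the first Lie--Poisson bracket by the second one \eqref{LiePoissonBraNLS2}, the principal grading by the homogeneous grading, and the coordinate $\lambda_k$ by $z_k$; indeed this theorem stands to Theorem~\ref{T:SVsinG} exactly as Theorem~\ref{T:SVNLS} stands to Theorem~\ref{T:SVMKdV}, the two members of each pair differing only in which Lie--Poisson bracket (equivalently, which end of the finite gap sector) is used. I treat the three assertions in turn.

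Assertion~(1) is immediate. On $\mathcal{O}_{2red}^{N+1}$ one has
\[
  -\det\widehat\mu(z)=\alpha^2(z)+\beta(z)\gamma(z)=h_0+h_1z+\cdots+h_{2N+2}z^{2N+2},
\]
so setting $z=z_k$ and using $\gamma(z_k)=0$ together with $w_k^2=\varepsilon^2\alpha^2(z_k)=\alpha^2(z_k)$ gives $P(w_k,z_k)=w_k^2-\bigl(-\det\widehat\mu(z_k)\bigr)=0$.

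For~(2) I would argue exactly as in the proof of Theorem~\ref{T:SVMKdV}. First, $\{z_k,z_l\}_2=0$: the $z_k$ are roots of $\gamma(z)$, hence functions of $\gamma_0,\dots,\gamma_N$ only, and these mutually commute with respect to $\{\cdot,\cdot\}_2$ because the corresponding basis vectors are nilpotent, $[X^p,X^q]=z^{p+q}[X,X]=0$, so $P_{33}^{mn}(N+1)=0$. For the remaining brackets I differentiate the defining relations \eqref{newvar3},
\[
  \frac{\partial z_k}{\partial\alpha_n}=0,\qquad
  \frac{\partial z_k}{\partial\gamma_m}=-\frac{z_k^m}{\gamma'(z_k)},\qquad
  \frac{\partial w_l}{\partial\alpha_n}=\varepsilon z_l^n,\qquad
  \frac{\partial w_l}{\partial\gamma_m}=\varepsilon\alpha'(z_l)\,\frac{\partial z_l}{\partial\gamma_m},
\]
and substitute into the chain-rule expansions of $\{z_k,w_l\}_2$ and $\{w_k,w_l\}_2$. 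Only the mixed structure constants
\[
  \{\gamma_m,\alpha_n\}_2=P_{31}^{mn}(N+1)=\bigl\langle\widehat\mu(z),[\,X^{-m+N+1},H^{-n+N+1}\,]\bigr\rangle_{N+1}
\]
survive; evaluating the commutator by $[X,H]=-X$ and reading off the residue gives $\{\gamma_m,\alpha_n\}_2=-\gamma_{m+n-N-1}$ in the admissible range and $0$ otherwise. The double sums then collapse to divided differences $\bigl(\gamma(z_k)-\gamma(z_l)\bigr)/(z_k-z_l)$ multiplied by elementary factors, which vanish for $k\neq l$ because $\gamma(z_k)=\gamma(z_l)=0$; for $k=l$, passing to the limit $z_l\to z_k$ (so that the divided difference becomes $\gamma'(z_k)$) yields $\{z_k,w_k\}_2=-\varepsilon z_k^{N+2}$ and $\{w_k,w_k\}_2=0$, which is \eqref{PoissonBra4}. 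Assertion~(3) follows formally from \eqref{PoissonBra4}: in the coordinates $(z_k,w_k)$ the Liouville 1-form is $\Omega_{N+1}=-\sum_k\varepsilon z_k^{-(N+2)}w_k\,dz_k$, and after fixing $h_0,\dots,h_{N+1}$ each $w_k$ becomes an algebraic function of $z_k$ through \eqref{CharPolyNLS}, so $\Omega_{N+1}$ restricts to a sum of meromorphic differentials on the spectral curve $P(w,z)=0$.

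The step I expect to be the main obstacle is the collapse of the mixed double sum in~(2): one must verify that the boundary (``tail'') terms produced by truncating the loop algebra to the finite gap sector $M^{N+1}$, and by treating $\alpha_{N+1},\gamma_{N+1}$ as fixed external parameters rather than dynamical coordinates, either cancel pairwise or are annihilated by the orbit equations $h_\nu=c_\nu$, $\nu=0,\dots,N+1$. This is precisely where the choice to eliminate the nilpotent family $\{\beta_m\}$ is essential: it keeps each $z_k$ a function of a single triangular block of coordinates, so that the Poisson algebra of the $(\alpha_m,\gamma_m)$ is sparse enough for the telescoping to go through.
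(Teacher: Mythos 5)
Your proposal is correct and takes essentially the same route as the paper's own proof: the same partial derivatives of \eqref{newvar3}, the same structure constants $\{\gamma_m,\alpha_n\}_2=-\gamma_{m+n-N-1}$, and the same collapse of the double sums followed by evaluation on the locus $\gamma(z_k)=\gamma(z_l)=0$ and the limit $z_l\to z_k$. The only cosmetic imprecision is that the sum actually collapses to $\bigl(z_k^{N+2}\gamma(z_l)-z_l^{N+2}\gamma(z_k)\bigr)/(z_k-z_l)$ (as the paper writes) rather than literally to an elementary factor times $\bigl(\gamma(z_k)-\gamma(z_l)\bigr)/(z_k-z_l)$, but the two differ only by a term proportional to $\gamma(z_k)$, so your conclusions $\{z_k,w_l\}_2=-\varepsilon z_k^{N+2}\delta_{kl}$ and $\{w_k,w_l\}_2=0$ are unaffected.
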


\begin{proof}
(1) The assertion is a direct consequence of \eqref{CharPolyNLS} and
\eqref{newvar3}.

(2) It is evident that
\begin{equation*}
 \{z_k,z_l\}_2=0,
\end{equation*}
since $z_k$, $k=1,\dots, N+1$ depend only on $\gamma_m$,
$m=0,1,\dots,N$ and $\gamma_m$ mutually commute.

Let us calculate the brackets $\{z_k,w_l\}_2$ and $\{w_k,w_l\}_2$.
From \eqref{newvar3} we have
\begin{equation*}
\frac{\partial z_k}{\partial \alpha_{n}}=0,\qquad
  \frac{\partial z_k}{\partial \gamma_{m}}=
  -\frac{z_k^m}{\gamma'(z_k)},\qquad
  \frac{\partial w_l}{\partial \alpha_{n}} = \varepsilon z_l^{n},\qquad
  \frac{\partial w_l}{\partial \gamma_{m}} = \varepsilon \alpha'(z_l)\frac{\partial z_l}{\partial \gamma_m}.
\end{equation*}
Further $\{\gamma_{m},\alpha_{n}\}_2=- \gamma_{m+n-N-1}$ when
$m+n\leqslant N+1$ and $\{\gamma_{m},\alpha_{n}\}_2=0$ when $m+n >
N+1$. Thus, we obtain
\begin{gather*}
  \{z_k,w_l\}_2 = \frac{1}{\gamma'(z_k)}
  \frac{z_k^{N+2}\gamma(z_l) -
  z_l^{N+2}\gamma(z_k)}{z_k-z_l},\\
  \{w_k,w_l\}_2 =\left(\frac{1}{\gamma'(z_k)}-\frac{1}{\gamma'(z_l)}\right)
  \frac{z_k^{N+2}\gamma(z_l)-z_l^{N+2}\gamma(z_k)}{z_k-z_l},
\end{gather*}
whence
\begin{equation*}
  \{z_k,w_l\}_2 =- z^{N+2}_k \delta_{kl},\qquad  \{w_k,w_l\}_2 =0.
\end{equation*}

(3) From \eqref{newvar3} it follows that Liouville 1-form on the
orbit $\mathcal{O}_{2red}^{N+1}$ is
\begin{equation*}
  \Omega_{N+1} = -\sum_k \varepsilon z_k^{-(N+2)} w_k \, d z_k.
\end{equation*}

The reduction to Liouville torus is done by fixing the values of
Hamiltonians $h_0$, $h_1$, \ldots, $h_{N}$. On the torus $w_k$ is
the algebraic function of $z_k$ due to (\ref{CharPolyNLS}). After
the reduction the form $\Omega_{-1}$ becomes a sum of meromorphic
differentials on the Riemann surface $P(w,z)\,{=}\,0$.
\end{proof}

\begin{remark}
When $\gamma_{N+1}=0$, $\alpha_{N+1}\neq 0$ one can replace
$\alpha_m, \beta_m, \gamma_m$ by $t_m, s_m, r_m$, $m=0,1,\dots,N$,
according to (\ref{PreviatoCoord}). In this case the new coordinates
$(z_k, w_k)$, $k=1,\dots,N+1$, defined by the formulas
\begin{equation}
  s(z_k)=0,\qquad
  w_k = \varepsilon t(z_k),
  \qquad \text{where} \quad \varepsilon^2=1. \label{newvar4}
\end{equation}
and have the same properties as in Theorem 4.
\end{remark}

The results of Theorem~\ref{T:SVNLS} and \ref{T:SVHM} can be
summarized as follows. Liouville tori for the nonlinear
Schr\"{o}dinger equation and Heisenberg magnetic chain have the same
number of parameterizing variables $z_k$ and each variable belongs
to the hyperelliptic curve (\ref{CharPolyNLS}) of genus $g=N+1$. In
other words, the common Liouville torus is the Jacobi variety of the
curve (\ref{CharPolyNLS}).

\label{lastpage}


\begin{thebibliography}{99}
\bibitem{Adler}  Adler~M. and van Moerbeke P. Completely integrable
systems, euclidean Lie algebras, and curves, {\it Adv. Math.}, {\bf
38} (1980), 318--379.

\bibitem{Alber} Alber~M.~S. and Alber~S.~J. Hamiltonian
formalism for finite-zone solutions of integrable equations, {\it C.
R. Acad. Sci. Paris.}, {\bf 301} (1985), 777--781.

\bibitem{Antonowicz} Antonowicz~M. and Rauch-Wojciechowski~S.
Constrained flows of integrable PDEs and bi-Hamiltonian structure of
garnier System, {\it Phys. Zett. A}, {\bf 147} (1990), 455.

\bibitem{Bern} Bernatska~J.~N. and Holod~P.~I. Canonic
coordinates of soliton type nonlinear equations in finite gap
sector, {\it Naukovi Zapysky NaUKMA}, {\bf 19} (2001), 31--42.

\bibitem{Blaszak98} Blaszak~M. On Separability of
bi-Hamiltonian chain with degenerated Poisson structures, {\it J.
Math. Phys.}, {\bf 39} (1998), 3213.

\bibitem{Blaszak99} Blaszak~M. Theory of separability of
multi-Hamiltonian chains, {\it J. Math. Phys.}, {\bf 40} (1999),
5725--5738.

\bibitem{Blaszak00} Blaszak~M. Degenerate Poisson pencils
on curves: new separability theory, {\it J. Nonlin. Math. Phys.},
{\bf 7} (2000), 213--243.

\bibitem{Dubrovin75} Dubrovin~B.~A. Periodic problem for the
Korteweg-de Vries equation in the class of finite-zone potentials,
{\it Functional Anal. Appl.} {\bf 9} (1975), 41--51.

\bibitem{Dubrovin82} Dubrovin~B.~A. and Novikov~S.~P.
Algebraic-geometric Poisson brackets for real finite-gap solutions
of the Sine-Gordon equation and non-linear Schr\"{o}dinger equation,
{\it Dokl. Akad. Nauk SSSR}, {\bf 267} (1982), 1295--1300.

\bibitem{Magri00R} Falqui~G., Magri~F., Pedroni~M. and Zubelli~J.-P.
Bi-Hamiltonian theory for stationary KdV flows and their
separability, {\it Reg. and Chaotic. Dyn.}, {\bf 5} (2000), 33--51.

\bibitem{Magri00T} Falqui G., Magri F. and Tondo G.
Bi-Hamiltonian systems and separations of variables: an example from
the Boussinesq hierarchy, {\it Theor. Math. Phys.}, {\bf 122}
(2000), 176--192.

\bibitem{Falqui} Falqui~G. Poisson pencils, integrability
and separation of variables,  {\it Philos. Trns. R. Soc. London.},
{\bf Ser.~A.} (2004).

\bibitem{Harnad92} Harnad J. and Wisse M.-A. Isospectral flow in loop
algebras and quasioeriodic solutions of the sine-Gordon equation
{\it J. Math. Phys.}, {\bf 34} (1993),  3518--3526.

\bibitem{Holod78} Holod~P.~I. and Prikarpatsky~A.~K.
Classical solitons of two-dimensional Thirring model with periodic
initial conditions, Preprint ITP-78-18R, Kiev, 1978.

\bibitem{Holod83} Holod~P.~I. Hamiltonian systems on the
orbirs of affine Lie groups and finite-bands integration of
nonlinear equations, {\it Nonlinear and turbulent processes in
physics}, {\bf 3} (1983), 1361--1367.

\bibitem{Holod85} Holod~P.~I. Hamiltonian Systems on the
Orbits of the Affine Lie Groups an Nonlinear Integrable Equations,
{\it Physics of Many-Particle Systems}, {\bf 7} (1985), 30--39.

\bibitem{Holod95} Holod~P., Pakuliak S. The dressing techniques for
intermediate hierarchies., {\it Theor. Math. Phys.}, {\bf 108}
(1995), 422--436.

\bibitem{Its} Its~A.~R. and Kotljarov~V.~P. Explicit formulas
for solutions of a nonlinear Schr\"{o}dinger equation (in Russian)
{\it Dokl. Akad. Nauk Ukrain. SSR}, {\bf À} (1976), 965--968.

\bibitem{Kac} Kac~V. Infinite dimensional Lie algebras,
3d edn, Cambridge University Press, 1990.

\bibitem{Kozel} Kozel V.~A. and Kotljarov~V.~P.
Almost periodic solutions of the sine-Gordon equation {\it Dokl.
Akad. Nauk Ukrain. SSR} {\bf À} (1976) 878--881.

\bibitem{Kroode} F. ten Kroode and J. van de Leur Bosonic and
fermionic Realizations of the Affine Algebra $\widetilde{gl}_n$ {\it
Commun. Math. Phys.}. {\bf 137} (1991), 67--107.

\bibitem{Magri91} Magnano~G. and Magri~F.
Poisson---Nijenhuis structure and Sato Hierarchy, {\it Rev. Math.
Phys.}, {\bf 3} (1991), 403--466.

\bibitem{Novikov} Novikov~S.~P. Periodic problem for the
Korteweg-de Vries equation in the class of finite-zone potentials
{\it Functional Anal. Appl.} {\bf 8} (1974), 54--66.

\bibitem{Previato} Previato~E. Hyperelliptic quasi-periodic and soliton
solution of the nonlinear Schrodinger equation, {\it Duke Math. J.},
{\bf 52} (1985), 329--377.

\bibitem{Prykarpatski} Prikarpatsky~A.~K. Almost periodic
solutions for modified nonlinear Schr\"{o}dinger equation, {\it
Theor. Math. Phys.}, {\bf 47} (1981), 323--332.

\bibitem{Takhtajan} Faddeev~L.~D., Takhtajan~L.~A. Hamiltonian
methods in the theory of solitons, Springer--Verlag, Berlin, 1987.

\bibitem{Sklyanin92} Sklyanin~E.~K. Separation of
variables in the classical integrable $SL(3)$ magnetic chain, {\it
Commun. Math. Phys.}, {\bf 150} (1992), 181--191.

\bibitem{Sklyanin95} Sklyanin~E.~K. Separation of
variables: new trends, {\it Progr. Theor. Phys. Suppl.}, {\bf 118}
(1995), 35--60.

\bibitem{Sklyanin92hep} Sklyanin~E.~K. Separation of
variables in the quantum integrable models related to the Yangian
$Y[sl(3)]$, {\it J.~Math.~Sci.}, {\bf 80} (1996), 1861-1871.

\bibitem{Smirnov} Babelon~O., Bernard~D., Smirnov~F.~A.
Quantization of solitons and the restricted sine-Gordon model //
{\it Commun. Math. Phys.}, {\bf 182} (1996), 319--354.

\bibitem{Veselov82} Veselov~A.~P. and Novikov~S.~P.
On Poisson brackets compatible with algebraic geometry and the
Korteweg-de Vries dynamics on the set of finite-zone potentials,
{\it Dokl. Akad. Nauk SSSR}, {\bf 266} (1982), 533--537

\bibitem{Veselov84} Veselov~A.~P. and Novikov~S.~P.
Poisson brackets and complex tori, {\it Trudy Mat. Inst. Steklov},
{\bf 165} (1984), 49--61.

\end{thebibliography}
\end{document}